\newtheorem{theorem}{Theorem}
\newtheorem{lemma}[theorem]{Lemma}
\newtheorem{cor}[theorem]{Corollary}
\newcommand{\p}{\mathbb{P}}  % Probability.
\newcommand{\e}{\mathbb{E}}  % Expectation.
\newcommand{\m}[1]{\mathbf{#1}} % for writing matrices
\newcommand{\W}{\mathcal{W}} % generating measures
\begin{document}
%
% --- Author Metadata here ---
%\conferenceinfo{KDD}{'14 New York, New York USA}
%\CopyrightYear{2007} % Allows default copyright year (20XX) to be over-ridden - IF NEED BE.
%\crdata{0-12345-67-8/90/01}  % Allows default copyright data (0-89791-88-6/97/05) to be over-ridden - IF NEED BE.
% --- End of Author Metadata ---

\title{Learning multifractal structure in large networks}
%
% You need the command \numberofauthors to handle the 'placement
% and alignment' of the authors beneath the title.
%
% For aesthetic reasons, we recommend 'three authors at a time'
% i.e. three 'name/affiliation blocks' be placed beneath the title.
%
% NOTE: You are NOT restricted in how many 'rows' of
% "name/affiliations" may appear. We just ask that you restrict
% the number of 'columns' to three.
%
% Because of the available 'opening page real-estate'
% we ask you to refrain from putting more than six authors
% (two rows with three columns) beneath the article title.
% More than six makes the first-page appear very cluttered indeed.
%
% Use the \alignauthor commands to handle the names
% and affiliations for an 'aesthetic maximum' of six authors.
% Add names, affiliations, addresses for
% the seventh etc. author(s) as the argument for the
% \additionalauthors command.
% These 'additional authors' will be output/set for you
% without further effort on your part as the last section in
% the body of your article BEFORE References or any Appendices.

\numberofauthors{3} %  in this sample file, there are a *total*
% of EIGHT authors. SIX appear on the 'first-page' (for formatting
% reasons) and the remaining two appear in the \additionalauthors section.
%
\author{
% You can go ahead and credit any number of authors here,
% e.g. one 'row of three' or two rows (consisting of one row of three
% and a second row of one, two or three).
%
% The command \alignauthor (no curly braces needed) should
% precede each author name, affiliation/snail-mail address and
% e-mail address. Additionally, tag each line of
% affiliation/address with \affaddr, and tag the
% e-mail address with \email.
%
% 1st. author
\alignauthor
Austin R. Benson\\
       \affaddr{Stanford University}\\
       \affaddr{Institute for Computational and Mathematical Engineering}\\
       \email{arbenson@stanford.edu}
% 2nd. author
\alignauthor
Carlos Riquelme\\
       \affaddr{Stanford University}\\
       \affaddr{Institute for Computational and Mathematical Engineering}\\
       \email{rikel@stanford.edu}
% 3rd. author
\alignauthor Sven Schmit\\
       \affaddr{Stanford University}\\
       \affaddr{Institute for Computational and Mathematical Engineering}\\
       \email{schmit@stanford.edu}
}
\date{21 February 2013}
% Just remember to make sure that the TOTAL number of authors
% is the number that will appear on the first page PLUS the
% number that will appear in the \additionalauthors section.

\maketitle
\begin{abstract}
  Generating random graphs to model networks has a rich history.
  In this paper, we analyze and improve upon the multifractal network generator (MFNG)  introduced by Palla \emph{et al}.
  We provide a new result on the probability of subgraphs existing in graphs generated with MFNG.
  From this result it follows that we can quickly compute moments of an important set of graph properties, such as the expected number of edges, stars, and cliques.
  Specifically, we show how to compute these moments in time complexity independent of the size of the graph and the number of recursive levels in the generative model.
  We leverage this theory to a new method of moments algorithm for fitting large networks to MFNG.
  Empirically, this new approach effectively simulates properties of several social and information networks.
  In terms of matching subgraph counts, our method outperforms similar algorithms used with the Stochastic Kronecker Graph model.
  Furthermore, we present a fast approximation algorithm to generate graph instances following the multifractal structure.
  The approximation scheme is an improvement over previous methods, which ran in time complexity quadratic in the number of vertices.
  Combined, our method of moments and fast sampling scheme provide the first scalable framework for effectively modeling large networks with MFNG.
\end{abstract}

% A category with the (minimum) three required fields
\category{H.4.0}{Information Systems Applications}{General}
%A category including the fourth, optional field follows...
\category{E.1}{Data structures}[Graphs and networks]

\terms{Algorithms, Theory}

\keywords{graph mining, real-world networks, multifractal, method of moments, graph sampling, stochastic kronecker graph, random graphs, modeling}

\section{Learning recursive graph structure}
\label{recursive}

Generative random graph models with recursive or hierarchical structure
are successful in simulating large-scale networks.
The recursive structure produces graphs
with heavy-tailed degree distribution and high clustering coefficient.
These random samples from recursive models are used to test algorithms, benchmark computer performance \cite{murphy2010introducing}, anonymizing and to understand the structure of networks.
Popular recursive and hierarchical models include Stochastic Kronecker Graphs (SKG, \cite{leskovec2010kronecker}),
Block Two-Level Erd\H{o}s-R\'{e}nyi (BTER, \cite{seshadhri2012community}), and the multifractal network generator (MFNG, \cite{palla2010multifractal, palla2011rotated}).
SKG is popular for several reasons.
Most importantly, the model captures degree distributions, clustering coefficients, and diameter.
There are several methods for fitting SKG parameters to simulate a target network.
The approaches include
maximum likelihood estimation (the KronFit algorithm, \cite{leskovec2010kronecker, leskovec2007scalable})
and a method of moments \cite{gleich2012moment}.
Maximum likelihood estimation is also used for the Multiplicative Attribute Graph model
\cite{kim2010multiplicative}, and
a simulated method of moments is used for mixed Kronecker product graph models \cite{moreno2013network, moreno2013learning}.
Finally, SKG produces graph samples in time complexity $\mathcal{O}(|E|\log(|V|))$
rather than $\mathcal{O}(|V|^2)$, where $E$ and $V$ are the edge and vertex sets
 of the graph.
On the other hand, SKG is constrained by a rather strong assumption on the
relationship between the number of recursion levels and the number of nodes in the graph.
Specifically, the number of recursive levels is $\lceil \log(|V|) \rceil$.

MFNG decouples the relationship between the recursion depth and the number of nodes,
and also naturally handles graphs where $|V|$ is not a power of two.
While there are ad-hoc methods for SKG when $|V|$ is not a power of two,
all analysis in the literature make the assumption.
We do not assume that $|V|$ is a power of two in our analysis in Section~\ref{sec:theory}.
For these reasons, MFNG is a more flexible model than SKG.
However, two issues are a barrier to making MFNG a practical model.
First, results for fitting graphs to the MFNG model have been
extremely limited.
Current procedures can only match a single graph property,
such as the number of nodes with degree $d$.
Second, to our knowledge, all MFNG sampling techniques are
$\mathcal{O}(|V|^2)$ algorithms,
making the generation of large graphs infeasible.

In this paper, we address both issues with MFNG and demonstrate that
can be a better alternative to the more popular SKG.
First, in Section~\ref{sec:theory}, we show how to compute several
key properties of MFNG (e.g., expected number of edges, triangles, stars, etc.)
with computational complexity independent of $|V|$ and the recursion depth.
Second, we develop a method of moments algorithm in Section~\ref{sec:mom}
to fit networks to MFNG.
The theory we develop in Section~\ref{sec:theory} makes this method extremely efficient and computationally tractable.
We test our new method of moments algorithm on synthetic data and large social and information networks.
In Section~\ref{sec:synthetic}, we show that our algorithm can identify model parameters in synthetic graphs sampled from MFNG,
and in Section~\ref{sec:learning}, we see that our algorithm can match the number of edges, wedges, triangles, $4$-cliques, $3$-stars, and $4$-stars in large networks.
In Section~\ref{sec:fast}, we provide a heuristic fast approximate sampling
scheme to randomly sample MFNG with complexity $\mathcal{O}(|E|)$.

\section{Overview of MFNG}
\label{sec:mfng}

MFNG is a recursive generative model based on a generating measure, $\W_k$.
The measure $\W_k$ consists of an $m$-vector of lengths $\ell$ with
$\sum_{i=1}^{m}\m \ell_i = 1$
and a symmetric $m \times m$ probability matrix $\m P$.
The subscript $k$ is the number of recursive levels, which we will subsequently explain.
In this paper, we refer to the $m$ indices of $\ell$ as \emph{categories}.
Also, since the measure is completely characterized by $\m P$, $\ell$, and $k$, we
write $\W_k(\m P, \ell)$ to explicitly describe the full measure.

An undirected graph $G = (V, E)$ is distributed according to $\W_k(\m P, \ell)$
if it is generated by the following procedure:
\begin{enumerate}
\item Partition $[0, 1]$ into $m$ subintervals of length $\ell_i$, $i = 1, \ldots, m$.
Recursively partition each subinterval $k$ times into $m$ pieces, using the relative lengths $\ell_i$.
This creates $m^k$ intervals $\ell_{i_1, \ldots, i_k}$ of length $\prod_{r=1}^{k} \ell_{i_r}$ such that
$\displaystyle \sum_{i_1, \ldots, i_k} \ell_{i_1, \ldots, i_k} = 1$.
\item
Sample $N$ points uniformly from $[0, 1]$ and create the nodes $V = \{x_1, \ldots, x_N\}$.
Each node $x_i$ is identified by its $k$-tuple of categories $c(x_i) = (i_1, \ldots, i_k)$, based on its
position on $[0, 1]$
and the partitioning in Step 1.

\item
For every pair of nodes $x_i$ and $x_j$ identified by the $k$-tuple of categories
$c(x_i) = (i_1, \ldots, i_k)$ and $c(x_j) = (j_1, \ldots, j_k)$,
add edge $(x_i, x_j)$ to $G$ with probability $\prod_{r=1}^{k} p_{i_rj_r}$.
\end{enumerate}

While the generation is intricate, MFNG admits a geometric interpretation.
Consider first the partition of the unit square into $m^2$ rectangles according to the lengths $\ell$.
The rectangle in position $(q, s)$ has side lengths $\ell_q$ and $\ell_s$, $1 \le q, s \le k$.
The point $(x_i, x_j) \in [0, 1] \times [0, 1]$ lands in the unit square, inside some rectangle $R$ with side lengths $\ell_{i_1}$ and $\ell_{j_1}$.
The edge `survives' the first round with probability $p_{i_1, j_1}$.
In the next round, we recursively partition $R$ according to the lengths $\ell$.
The relative positions of $x_i$ and $x_j$ land the point in a new rectangle
with side lengths $\ell_{i_2}$ and $\ell_{j_2}$.
The edge survives the second round with probability $p_{i_2, j_2}$.
The process is repeated $k$ times and is illustrated in Figure~\ref{fig:mfng}.
If an edge survives all $k$ levels, then it is added to the graph.
\begin{figure}[ht]
\vskip 0.2in
\begin{center}
\centerline{\includegraphics[width=\columnwidth]{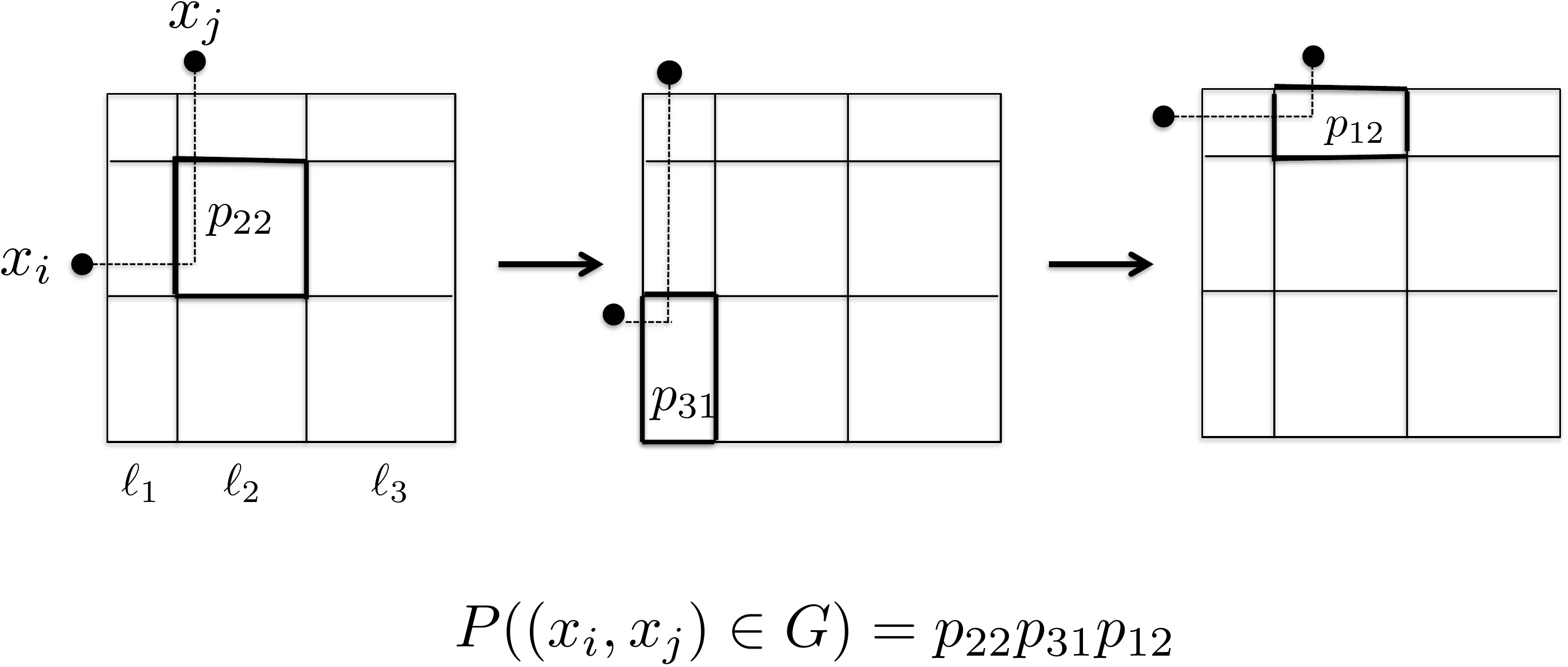}}
\caption{MFNG's recursive edge generation with $m = k = 3$.}
\label{fig:mfng}
\end{center}
\vskip -0.2in
\end{figure}

% I think by this point, the notation should be clear.
%\subsection{Notation}
%    Let $\W_k$ denote the distribution of a random graph generated by measure $\W$ with $k$ levels.
%    We call each of the $m$ initial intervals a \emph{category},
%    and we denote by $c_i$ the category corresponding to the interval of length $l_i$, $1 \le i \le m$.
%    $\C$ denotes the indices of all categories, i.e., $\C = \{1, \ldots, m\}$.
%    Graphs are denoted by $G = (V,E)$ with $V$ nodes and $E$ edges.
%    Sometimes we will write $|V| = n$.
%    A given node $u$ belongs to $k$ categories (one at each level),
%    and the categories of node $u$ are denoted by $c^u = (c_1, \ldots, c_k)$.
%    In particular, we denote the $i$-th category, by $c^u_i$.
%    Furthermore, the length of a category $c$ is denoted by $l_{c}$.

\section{Theoretical results}
\label{sec:theory}

    The original work on MFNG \cite{palla2010multifractal} shows how to compute the expected feature counts
    for graph properties by examining the entire expanded measure $\W_k(\m P, \ell)$.
    In other words, to count the features, the entire probability matrix of size $m^k \times m^k$ is formed.
    However, in some cases $m^k = \mathcal{O}(|V|)$ (see the examples in Section~\ref{sec:learning}), and computing $\mathcal{O}(|V|^2)$ probabilities is infeasible for large networks.
    Thus, current methods for counting and fitting features are intolerably expensive.
    Theorem~\ref{thm:main} shows that we can count many of the same features by only looking at the probability matrix $\m P$
    a constant number of times (independent of $|V|$).
    Hence, we are able to scale these computations to graphs with a large number of nodes.

\subsection{Decoupling of recursive levels}

We start with a lemma that shows how to decompose a generating measure $\W_k$ with $k$ recursive levels in $k$ measures with depth one.
This will make it easier to count subgraphs in Theorem~\ref{thm:main}.

\begin{lemma}
\label{thm:coupling}
    Consider generating measures $\W_1(\m P, \ell)$ and $\W_k(\m P, \ell)$, which are parameterized by the same probability matrix $\m P$
    and lengths $\ell$ but different recursion depths.
    Let graphs $H_1, \ldots, H_k \sim \W_1(\m P, \ell)$ be independently drawn, and also denote $H_i = (V, E_{i})$, with nodes labelled arbitrarily.
    Then the intersection graph $G = (V, \cap_{i=1}^{k}E_{i}) = (V, E_G) \sim \W_k(\m P, \ell)$.
\end{lemma}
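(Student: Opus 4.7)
The plan is to compare the two constructions directly by matching (i) the joint distribution of category assignments across vertices, and (ii) the conditional distribution of the edge set given those categories.

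First, I would unpack what a vertex's categories look like under each model. In $\W_k(\m P, \ell)$, each vertex $x$ is placed uniformly in $[0,1]$ and assigned a $k$-tuple $(c_1(x), \ldots, c_k(x))$ by reading off its address in the recursive partition. A direct calculation using Step 1 of the construction shows that the probability $x$ lands in the subinterval indexed by $(i_1, \ldots, i_k)$ equals $\prod_{r=1}^{k} \ell_{i_r}$, so the $k$ coordinates of the address are i.i.d.\ with mass function $\ell$. In the intersection construction, each $H_r \sim \W_1(\m P, \ell)$ is sampled independently, so vertex $v$ receives a category $c^{(r)}(v)$ in $H_r$ drawn with mass $\ell$, and these are independent across $r$ by independence of the $H_r$. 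Hence, under any bijection aligning the vertices of $H_1, \ldots, H_k$ with those of a $\W_k$ sample, the full collection of vertex-category $k$-tuples has the same joint law under either construction.

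Next, I would condition on the categories. In $\W_k$, Step 3 declares each pair $(v, w)$ an edge independently with probability $\prod_{r=1}^{k} p_{c_r(v), c_r(w)}$. In the intersection model, given the categories, the events $\{(v, w) \in E_r\}$ are independent both across pairs (within each $H_r$) and across $r$ (by independence of the $H_r$). Therefore $(v, w) \in E_G = \bigcap_{r=1}^{k} E_r$ with probability $\prod_{r=1}^{k} p_{c^{(r)}(v), c^{(r)}(w)}$, and the edges of $G$ remain conditionally independent across pairs because distinct pairs use disjoint collections of edge-indicator variables. This matches the conditional edge law of $\W_k$ exactly.

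Combining (i) and (ii) yields equality in joint distribution of the (categories, edge set) pair, and marginalizing over the categories gives $G \sim \W_k(\m P, \ell)$. The main obstacle is purely notational: one must carefully align the two sources of randomness, namely a single uniform draw whose recursive address supplies a $k$-tuple of categories versus $k$ independent draws each supplying one category. Once this identification is made, the rest is conditional-independence bookkeeping, and the phrase \emph{nodes labelled arbitrarily} in the lemma statement simply reflects the freedom to fix that bijection however we please.
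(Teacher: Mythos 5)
Your proposal is correct and follows essentially the same route as the paper's proof: condition on the vertex categories, observe that a single uniform draw's recursive address and $k$ independent single-level draws induce the same (product) law on $k$-tuples of categories, and then match the conditional edge probabilities $\prod_{r} p_{c_r(v),c_r(w)}$. If anything, you are slightly more thorough than the paper in spelling out the joint law across all vertices and the conditional independence of edges across distinct pairs, which the paper leaves implicit.
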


\begin{proof}
    We prove the lemma by conditioning on the categories to which the nodes belong
    (recall that a category is the set of intervals that a node falls into at each level of the recursion). 
    Each node $u \in V$ is identified with some real number in $[0, 1]$.
    The probability that the $k$-tuple of categories corresponding to
    $u$ is $c(u) = (c_1, \ldots, c_k)$ in any graph $H \sim \W_k(\m P, \ell)$ is simply $\prod_{r=1}^{k}\ell_{c_r}$.
    By independence of the $H_i$, the probability that the same node $u$ is in the same categories $c_1, \ldots, c_k$
    in the graphs $H_1, \ldots, H_k$, respectively, is also  $\prod_{r=1}^{k}\ell_{c_r}$.

    Note that
    \begin{align}
        &  \p\left((u, v)\in E_G | c(u) = (c^u_1, \ldots, c^u_k), c(v) = (c^v_1, \ldots, c^v_k)\right) \nonumber \\
        = &\p\left((u, v)\in \cap_{i=1}^{k} E_{i} | c(u) = (c^u_1, \ldots, c^u_k), c(v) = (c^v_1, \ldots, c^v_k)\right) \nonumber \\
        = &\prod_{i=1}^{k}\p\left((u, v)\in E_{i} | c(u) = (c^u_1, \ldots, c^u_k), c(v) = (c^v_1, \ldots, c^v_k)\right) \nonumber \\
        = &\prod_{i=1}^{k}\p\left((u, v)\in E_{i} | [c(u)]_i = c^u_i, [c(v)]_i = c^v_i \right) \nonumber \\
        = &\prod_{i=1}^k p_{c^u_i,c^v_i}. \nonumber
    \end{align}
    In the first equality, we use the definition of $E_G$;
    in the second and third equalities, we use the independence of the $H_i$;
    and in the final equality, we use the definition of $\W_1$.
    However, for any graph $G^{\prime} \sim \W_k(\m P, \ell)$,
    \begin{align}
        &  \p\left((u, v)\in G^{\prime} | c(u) = (c^u_1, \ldots, c^u_k), c(v) = (c^v_1, \ldots, c^v_k)\right) \nonumber \\
         =&\prod_{i=1}^k p_{c^u_i,c^v_i}. \nonumber
    \end{align}
\end{proof}

Figure~\ref{fig:coupling} illustrates Lemma~\ref{thm:coupling}.
Our main result is a straightforward consequence of this lemma.

\begin{figure}
\usetikzlibrary{arrows,positioning}
\begin{tikzpicture}

\node[draw,circle] (v2) at (-4.5,3.5) {};
\node[draw,circle] (v5) at (-4.5,2.5) {};
\node[draw,circle] (v4) at (-3.5,2.5) {};
\node[draw,circle] (v3) at (-3.5,3.5) {};
\node[draw,circle,label=above:{$H_1$}] (v1) at (-4,4.5) {};
\node[draw,circle] (v6) at (-2.5,3.5) {};
\node[draw,circle] (v7) at (-1.5,3.5) {};
\node[draw,circle] (v9) at (-2.5,2.5) {};
\node[draw,circle] (v8) at (-1.5,2.5) {};
\node[draw,circle,label=above:{$H_2$}] (v10) at (-2,4.5) {};
\node[draw,circle] (v11) at (-0.5,3.5) {};
\node[draw,circle] (v12) at (-0.5,2.5) {};
\node[draw,circle] (v13) at (0.5,2.5) {};
\node[draw,circle] (v14) at (0.5,3.5) {};
\node[draw,circle,label=above:{$H_3$}] (v15) at (0,4.5) {};
\node[draw,circle] (v20) at (2.5,2.5) {};
\node[draw,circle] (v21) at (3.5,2.5) {};
\node[draw,circle] (v22) at (2.5,3.5) {};
\node[draw,circle] (v19) at (3.5,3.5) {};
\node[draw,circle,label=above:{$G$}] (v18) at (3,4.5) {};

\draw  (v1) edge (v2);
\draw  (v2) edge (v3);
\draw  (v3) edge (v4);
\draw  (v5) edge (v4);
\draw  (v1) edge (v3);
\draw  (v6) edge (v7);
\draw  (v8) edge (v9);
\draw  (v10) edge (v6);
\draw  (v10) edge (v7);
\draw  (v11) edge (v12);
\draw  (v12) edge (v13);
\draw  (v13) edge (v14);
\draw  (v14) edge (v15);

\draw  (v18) edge (v19);
\draw  (v20) edge (v21);
\draw  (v15) edge (v11);
\draw  (v18) edge (v22);

\node (v17) at (2,3.5) {};
\node (v16) at (1,3.5) {};
\draw [-triangle 90,] (v16) edge node[auto] {$\cap$} (v17);

\end{tikzpicture}
\caption{Illustration of Lemma~\ref{thm:coupling}.
         If three graphs $H_1, H_2,$ and $H_3$ are generated from $\W_1(\m P, \ell)$,
         then their intersection $G$ follows the distribution of $\W_3(\m P, \ell)$.}
\label{fig:coupling}
\end{figure}
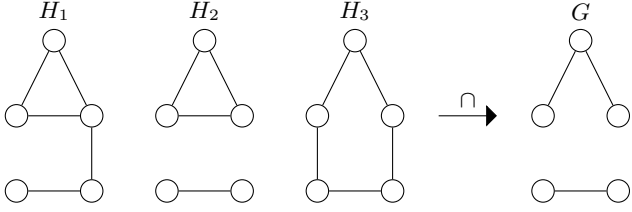

\begin{theorem}
\label{thm:main}
    Let $\W_1(\m P, \ell)$ and $\W_k(\m P, \ell)$ be generating measures defined by the same probabilities $\m P$ and lengths $\ell$
    but with different recursion depths.
    Consider $k$ multifractal graphs $H_i = (V, E_i)$ generated independently from $\W_1(\m P, \ell)$
    and a multifractal graph $G = (V, E_G)$ generated from $\W_k(\m P, \ell)$.
    For any event $A$ on $G$ that can be written as $A = \{S \subset E_G \}$, where $S \subset \{(i,j) : i,j \in \{1,\ldots,n\}, i<j\}$,
    \begin{equation}
        \p_{\W_k}(A) = \p_{\W_1}(A)^k \nonumber.
    \end{equation}
\end{theorem}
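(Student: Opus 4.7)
The plan is to reduce the statement directly to Lemma~\ref{thm:coupling}, which already provides a coupling between a single draw from $\W_k(\m P, \ell)$ and the intersection of $k$ independent draws from $\W_1(\m P, \ell)$. Since the event $A$ has the special form $A = \{S \subset E_G\}$, it is monotone (upward-closed) in the edge set, and intersection of edge sets interacts cleanly with containment of a fixed set.

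First I would invoke Lemma~\ref{thm:coupling} to identify $G$ in distribution with the intersection graph on vertex set $V$ of $k$ independent samples $H_1, \ldots, H_k \sim \W_1(\m P, \ell)$, so that $E_G \stackrel{d}{=} \bigcap_{i=1}^k E_i$. Next, I would use the set-theoretic equivalence
\begin{equation*}
    S \subset \bigcap_{i=1}^k E_i \iff S \subset E_i \text{ for every } i = 1, \ldots, k,
\end{equation*}
to rewrite the event of interest as an intersection of events concerning the individual $H_i$. Then, by the independence of $H_1, \ldots, H_k$, the probability of this intersection factors:
\begin{equation*}
    \p_{\W_k}(A) = \p\Bigl(\bigcap_{i=1}^k \{S \subset E_i\}\Bigr) = \prod_{i=1}^k \p(S \subset E_i).
\end{equation*}
Finally, each factor equals $\p_{\W_1}(A)$ because the $H_i$ are identically distributed as $\W_1(\m P, \ell)$, which yields $\p_{\W_k}(A) = \p_{\W_1}(A)^k$.

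There is no real obstacle here beyond correctly invoking the lemma; the work has been packaged into Lemma~\ref{thm:coupling}. The only subtlety worth flagging is that the argument depends crucially on $A$ being upward-closed in the edge set — if $A$ instead specified both the presence and the absence of certain edges, intersection would not preserve it and the factorization would fail. Since the statement restricts to events of the form $\{S \subset E_G\}$, this monotonicity holds automatically, and the proof is a three-line chain of equalities.
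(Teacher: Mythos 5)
Your proposal is correct and follows essentially the same route as the paper: the paper's proof likewise rewrites $A$ as $\{s \in E_G,\ \forall s \in S\}$, passes to the intersection of the $k$ independent $\W_1$-samples via Lemma~\ref{thm:coupling}, factors by independence, and uses that the $H_i$ are identically distributed. Your explicit remark that upward-closedness of $A$ is what makes the reduction work matches the discussion the paper gives immediately after the theorem.
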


\begin{proof}
    \begin{align}
    \p_{\W_k}(A) &= \p_{\W_k}(s \in E_G, \forall s \in S) \nonumber \\
    &= \p_{(\W_1)^k}(s \in E_{i}, \forall s \in S, \forall i \in \{1,\ldots,k\}) \nonumber\\
    &= \prod_{i=1}^k \p_{(\W_1)^k}(s \in E_{i}, \forall s \in S) \nonumber \\
    &= \p_{(\W_1)^k}(s \in E_{1}, \forall s \in S)^k \nonumber \\
    &= \p_{\W_1}(A)^k. \nonumber
    \end{align}
\end{proof}

In other words, the probability that a subset of the edges exists if the graph is drawn from $\W_k$ is the $k$-th power of the probability that these
edges exist if the graph is drawn from $\W_1$.
The condition that $A$ can be written as $A = \{S \subset E \}$ is subtle.
It states that Theorem~\ref{thm:main} holds if we can specify a subset of the edges that must be present in the graph.
We can be indifferent about certain edges, but we cannot specify that an edge is \emph{not} present in the graph.

We can now easily compute the moments of subgraph counts, such as the number of edges, triangles, and larger cliques in MFNG.
The following corollary shows how to use Theorem~\ref{thm:main} for these calculations.
for graphs generated by MFNG.
\begin{cor}\label{cor:edges}
The expected number of edges $|E|$ in a graph sampled from MFNG is
\begin{equation}
    \e[|E|] = \binom{n}{2} s^k,
\end{equation}
where
\begin{equation}
    s = \sum_{i,j \in [m]} p_{ij} \ell_i \ell_j.
\end{equation}
\end{cor}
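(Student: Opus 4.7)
The plan is to combine linearity of expectation with Theorem~\ref{thm:main}, applied to the simplest possible event: the presence of a single edge. First I would write
\begin{equation}
\e[|E|] = \sum_{1 \le i < j \le n} \p_{\W_k}\bigl((i,j) \in E_G\bigr), \nonumber
\end{equation}
which holds by linearity of expectation since $|E| = \sum_{i<j} \ind\{(i,j) \in E_G\}$.

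Next, I would identify each summand with an event of the form $A = \{S \subset E_G\}$ for the singleton $S = \{(i,j)\}$, so Theorem~\ref{thm:main} applies and yields $\p_{\W_k}((i,j) \in E_G) = \p_{\W_1}((i,j) \in E_G)^k$. It then suffices to compute the single-level edge probability. Conditioning on the categories $c(i) = a$ and $c(j) = b$ (which at depth one are single indices in $[m]$ occurring independently with probabilities $\ell_a$ and $\ell_b$), and using the MFNG definition in which the edge is added with probability $p_{ab}$, I obtain
\begin{equation}
\p_{\W_1}\bigl((i,j) \in E_G\bigr) = \sum_{a,b \in [m]} \ell_a \ell_b \, p_{ab} = s. \nonumber
\end{equation}

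Plugging back in gives $\p_{\W_k}((i,j) \in E_G) = s^k$, and since this value does not depend on the particular pair $(i,j)$, summing over the $\binom{n}{2}$ unordered pairs yields $\e[|E|] = \binom{n}{2} s^k$, as claimed. There is no real obstacle here: the only subtlety is verifying that a single-edge event falls within the scope of Theorem~\ref{thm:main} (it trivially does, since $\{S \subset E_G\}$ with $|S| = 1$ is the simplest valid instance), after which the corollary reduces to elementary conditioning over categories followed by summation.
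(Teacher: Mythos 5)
Your proof is correct and follows essentially the same route as the paper's: apply Theorem~\ref{thm:main} to the single-edge event, condition on the depth-one categories to obtain $\p_{\W_1}((i,j)\in E_G) = s$, and sum the resulting probability $s^k$ over the $\binom{n}{2}$ pairs. The only difference is that you make the linearity-of-expectation step explicit, which the paper leaves implicit.
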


\begin{proof}
Let $u$ and $v$, $u \neq v$, be two random nodes of $G$.
Let $A$ denote the event $A = \{ (u, v) \in E \}$,
and we define $A_i$ to denote the analogous event restricted to $H_i$ in the multifractal generator.
By Theorem~\ref{thm:main}, we have that
$$ \p(A) = \prod_{i=1}^k \p(A_i) = \p(A_1)^k.  $$
Now that we can restrict ourselves to $A_1$,
\begin{align}
 \p(A_1) &= \sum_{i,j \in [m]}  \p(A^{(1)} | c^u_1 = i, c^v_1 = j) \ \p(c^u_1 = i, c^v_1 = j) \\
 &= \sum_{i,j \in [m]}  p_{ij} \ \p(c^u_1 = i) \p(c^v_1 = j) \\
 &= \sum_{i,j \in [m]}  p_{ij} \ \ell_i \ell_j = s.
\end{align}
We conclude that
\begin{equation}
\p(A) = \p((u, v) \in E) = s^k.
\end{equation}
The expected number of edges is then given by
\begin{equation}
\e[|E|] = \binom{n}{2} \ s^k.
\end{equation}
\end{proof}

\begin{cor}
\label{cor:moments}
Graphs sampled from MFNG also have the following moments.
The expected number of
$d$-stars~\footnote{
A $d$-star is a graph with $d+1$ vertices and $d$
edges that connect the first node to all other vertices.}
$S_d$ is:
\begin{equation}
\e[S_d] = n \binom{n-1}{d}
\left(
\sum_{i_1, \ldots, i_{d+1} \in [m]}
\prod_{j=2}^{d+1}p_{i_1i_j}
\prod_{j=1}^{d+1}\ell_{i_j}
\right)^k. \nonumber
\end{equation}
In particular, the expected number of wedges ($2$-stars) is
\begin{equation}
\e[S_2] = n \binom{n-1}{2}
\left(
\sum_{i_1, i_2, i_3 \in [m]} p_{i_1i_2}p_{i_1i_3} \ \ell_{i_1}\ell_{i_2}\ell_{i_3}
\right)^k. \nonumber
\end{equation}
The variance $\sigma_E =  \text{Var}(|E|)$ of the number of edges is
\begin{align}
\sigma_E &= \binom{n}{2} s^k \left(1- \binom{n}{2} s^k \right) + 2 \ \e[S_2] + {n \choose 2}{n-2 \choose 2} s^{2k} \nonumber,
\end{align}
where $s$ is the same as in Corollary~\ref{cor:edges}.

The expected number of $t$-cliques~\footnote{
A $t$-clique is a graph with $t$ vertices where
every possible edge between the vertices exists.}
$C_t$ is
\begin{equation}
\e[C_t] = \binom{n}{t} \ s_t^k,
\end{equation}
where
\begin{equation}
s_t := \sum_{i_1,\dots,i_t \in [m]}
\left(
 \prod_{\substack{ j,q \in [t] \\ j < q }} p_{i_ji_q}
\right)
 \ell_{i_1} \ell_{i_2} \cdots \ell_{i_t}.
\end{equation}
In particular, the expected number of triangles ($3$-cliques) is:
\begin{equation}
\e[C_3] = \binom{n}{3} \left(\sum_{i, j, t \in [m]} p_{ij} p_{it} p_{jt} \ \ell_i \ell_j \ell_t\right)^k.
\end{equation}
Finally, the expected number of nodes with degree $d$, $E_d$,
satisfies $\e[E_{|V|-1}] = \e[S_{|V|-1}] $ and
\begin{equation}
\e[E_d] = \e[S_d] - \sum_{i = d+1}^{|V|-1} \binom{i}{d} \e[E_{i}].
\end{equation}
\end{cor}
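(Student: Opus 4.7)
The entire corollary flows from a single template: for each feature whose expectation we want, identify the edge set that must be present, apply Theorem~\ref{thm:main} to reduce the probability in $\W_k$ to the $k$-th power of the corresponding probability in $\W_1$, then evaluate the $\W_1$ probability by conditioning on the first-level categories of the vertices involved (which are independent with marginals $\ell_i$). This is exactly the strategy used in the proof of Corollary~\ref{cor:edges}, so the plan is to replicate it subgraph by subgraph.

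For the $d$-star count, I fix a candidate center $u$ and $d$ distinct leaves $v_1, \ldots, v_d$. The event ``the $d$ edges $(u,v_j)$ are all present'' is of the required form $\{S \subset E_G\}$, so Theorem~\ref{thm:main} gives probability $\p_{\W_1}(\cdot)^k$. Conditioning on categories $i_1, \ldots, i_{d+1}$ of $u, v_1, \ldots, v_d$, the edges are conditionally independent with probability $\prod_{j=2}^{d+1} p_{i_1 i_j}$, and the category probabilities factor as $\prod_{j=1}^{d+1} \ell_{i_j}$; summing over categories yields the bracketed expression, and linearity of expectation over the $n\binom{n-1}{d}$ choices of center and leaves gives $\e[S_d]$. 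The wedge formula is the case $d=2$. The $t$-clique argument is identical except that the required edge set is all $\binom{t}{2}$ edges among $t$ labeled vertices, so the $\W_1$ probability becomes $\sum_{i_1,\ldots,i_t} \prod_{j<q} p_{i_j i_q} \prod_j \ell_{i_j} = s_t$, multiplied by $\binom{n}{t}$ vertex subsets and raised to the $k$; the triangle formula is $t = 3$.

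For the variance of $|E|$, I write $|E| = \sum_{e} X_e$ where $X_e$ is the indicator that $e$ is present, and expand $\e[|E|^2] = \sum_{e,e'} \e[X_e X_{e'}]$ into three groups of ordered pairs: $e = e'$ contributes $\binom{n}{2} s^k$; pairs sharing exactly one vertex form wedges and contribute $2\,\e[S_2]$ (two ordered pairs per unordered wedge, each with probability already computed); and pairs on four distinct vertices are independent in $\W_1$ after conditioning on the four disjoint categories, giving probability $s^{2k}$ and contributing $\binom{n}{2}\binom{n-2}{2} s^{2k}$. Subtracting $\e[|E|]^2 = \binom{n}{2}^2 s^{2k}$ and grouping the $\binom{n}{2}^2 s^{2k}$ term with the first case produces the stated expression. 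The only care needed is the counting of ordered pairs in each case, which I would verify explicitly since it is the one place a routine slip would invalidate the formula.

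The degree-count formula is purely combinatorial and does not require the multifractal machinery at all. The identity is that each node of degree $i$ is the center of exactly $\binom{i}{d}$ distinct $d$-stars, so $S_d = \sum_{i \geq d} \binom{i}{d} E_i$ as random variables. Taking expectations, isolating the $i=d$ term (where $\binom{d}{d}=1$), and solving for $\e[E_d]$ gives the displayed recurrence, with the base case $\e[E_{|V|-1}] = \e[S_{|V|-1}]$ because a node of maximum degree is automatically the center of a unique $(|V|-1)$-star. Overall, the main obstacle I anticipate is the bookkeeping in the variance computation, in particular keeping ordered versus unordered pairs straight when reconciling the wedge count with the definition of $S_2$; everything else is a direct application of Theorem~\ref{thm:main} followed by category conditioning.
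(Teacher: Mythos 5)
Your proposal is correct and matches the paper's approach: the paper states that the proofs of Corollary~\ref{cor:moments} "follow the same patterns as the proof of Corollary~\ref{cor:edges}" (applying Theorem~\ref{thm:main} to the relevant edge set and then conditioning on first-level categories), which is precisely your template, and your ordered-pair decomposition of $\e[|E|^2]$ reproduces the stated variance formula exactly after subtracting $\binom{n}{2}^2 s^{2k}$. The combinatorial identity $S_d = \sum_{i \ge d}\binom{i}{d}E_i$ for the degree counts is also the intended argument.
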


\begin{proof}
The proofs follow the same patterns as of the proof of Corollary~\ref{cor:edges}.
We include the proofs in supplementary material online
\footnote{\url{http://stanford.edu/~arbenson/mfng.html}}.
\end{proof}
These are some examples of properties for which we can compute the \emph{exact} expectation.
However, we can also compute useful approximations.
For a given measure $\W_k$, we could empirically compute the value of $\e[C_t]$ for each $t$ until we find $\e[C_{t^*}] \ge 1 > \e[C_{t^*+1}]$,
which is a good estimator of the expected maximum clique size.
However, a concentration result is still needed to claim that the clique number will be in a small neighborhood of $t^*$ with high probability.

Finally, we note that there are graph properties which will certainly not translate to this theoretical framework.
Let $\mu(G)$ to be the chromatic number of $G$,
i.e., the smallest number of colors needed to color the vertices such that vertices connected by an edge are not the same color.
Suppose we want to compute $\p(\mu(G) < 10)$.
If the theorem is used directly, then the result is $\p(\mu(G) < 10) = \p(\mu(H_1) < 10)^k$.
But $\p(\mu(G) < 10) \ge \p(\mu(H_1) < 10)$ since taking the intersection of graphs can only reduce the chromatic number.
In this case, $\p(\mu(G) < 10)$ cannot be written as an event on the subset of the edges of the graph.
Hence, the assumptions of the theorem are violated.

% I don't think we need these right now:  Can see if we have room. -Austin
%\textbf{include next two sections or not?}
%As an example, consider a graph with three nodes, $u$, $v$ and $w$.
%We can use the above theorem to calculate the probability that it contains the wedge $\{(u,v), (v,w)\} \in E$ (and then use basic probability to calculate whether the graph contains a wedge).
%Furthermore, we can use the theorem to calculate the probability that a graph contains the triangle $(u, v, w) \in E$.
%However, we cannot use the theorem to calculate the probability that the wedge $\{(u, v), (v, w)\}$ is \emph{not} a triangle, i.e. additionally require $(u,w)\notin E$.
%

%

\section{Method of moments learning algorithm}
\label{sec:mom}

From now on, we change gears and look at how we can use the theory laid out above to fit multifractal measures to real networks.
Given a graph $G$, we are interested in finding a probability matrix $\m P$, a set of lengths $\ell$, and a recursion depth $k$,
such that graphs generated from the measure $\W_k(\m P, \ell)$ are similar to $G$.
The theoretical results in Section~\ref{sec:theory} make it simple to compute moments for MFNG, so a method of moments is natural.
In particular, given a set of desired features counts $F_i$ (such as number of edges, $2$-stars, and triangles), we seek to solve the following optimization problem:

\begin{equation}
\begin{aligned}
& \underset{\m P, \ell, k}{\text{minimize}}
& & \sum_{i}\frac{|F_i - \mathbb{E}_{\W_k}[F_i]|}{F_i}  \\
& \text{subject to}
& & 0 \le p_{ij} = p_{ji} \le 1,     & 1 \le i \le j \le m  \\
& & & 0 \le \ell_{i} \le 1, & 1 \le i \le m \\
& & & \sum_{i=1}^{m}\ell_i = 1
\end{aligned}
\label{eq:opt_MFNG}
\end{equation}

If certain features are more important to fit,
then the objective function can be generalized to include weights, i.e.
\[
\sum_{i}  \frac{w_i|F_i - \mathbb{E}_{\W_k}[F_i]|}{F_i},
\]
for $w_i \ge 0$.
For simplicity of our numerical experiments,
we only use an unweighted objective in this paper.
Similar objective functions were proposed for SKG \cite{gleich2012moment}
and for mixed Kronecker product graph models \cite{moreno2013learning}.
In Section~\ref{sec:experimental}, we see that the simple objective function works well on synthetic and real data sets.

\subsection{Desired features}
\label{sec:desired}
We want to model real world networks well, while also being computationally feasible.
Theorem~\ref{thm:main} shows that, given a generating measure $\W_k(\m P, \ell)$, we can quickly compute moments of several feature counts.
However, we are also interested in global graph properties such as degree distribution and clustering coefficient.
These properties are not covered by our theoretical results.
To this end, we compute the expected number of
$d$-stars and
$t$-cliques\footnote{From now on we implicitly mean counting subgraphs if we say counting $d$-stars or $t$-cliques.},
and use those as a proxy.
If the number of $d$-stars and $t$-cliques are similar,
then we expect the degree distribution and clustering to be also similar.
In particular, the global clustering coefficient is three times the ratio of the number of triangles ($3$-cliques) to the number of wedges ($2$-stars) in the graph.
In Section~\ref{sec:learning}, we show that matching star and clique subgraph counts in social and information networks leads to a generating measure that produces graphs with a similar degree distribution.

\subsection{Solving the optimization problem}
Optimization problem~(\ref{eq:opt_MFNG}) is not trivial to solve,
as there are many local minima and some of them turn out to be very poor.
On the other hand, given the feature counts of a graph,
running a standard optimization solver such as \texttt{fmincon} in Matlab,
finds such a critical point quickly:
we only have to fit $m^2 + m + 1$ variables.
Typically, $m$ is two or three.
Thus, we solve the optimization problem with many random restarts and use the best result.
While there are more sophisticated methods,
this method works on several practical examples (see Section~\ref{sec:experimental}).

\section{Fast sampling for sparse graphs}
\label{sec:fast}
In this section, we discuss a heuristic method for generating sample graphs following the multifractal measure that is effective when the graph to be generated is sparse, i.e. has relatively few edges.
This is important because the naive sampling method takes $\mathcal{O}(|V|^2)$ time---it considers the edge for every pair of nodes in the graph.
The fast heuristic algorithm is inspired by the ``ball-dropping'' scheme for SKG (see Section 3.6 of \cite{leskovec2010kronecker}).
However, due to the stochastic nature of the location of the nodes, our algorithm is not exact and merely a heuristic, unlike in the SKG case.
The speed-up is obtained by fixing the number of edges in advance and only considering $\mathcal{O}(|E|)$ pairs of vertices.
We will demonstrate that our sampling algorithm runs in time $\mathcal{O}(|E|\log(|V|))$ time.
The pseudo-code is given in Algorithm~\ref{alg:fast}.
In the subsequent sections, we give the details of the algorithm and briefly discuss the performance.

\subsection{The algorithm}
In order to avoid looping over all pairs of nodes, we fix the number of edges.
The number of edges is determined by sampling a normal random variable with mean $\e[|E|]$ and variance $\sigma_E$,
as provided by Corollaries~\ref{cor:edges}~and~\ref{cor:moments}.
Since the number of edges is a sum of Bernoulli trials, the normal approximation is accurate.
%From now on, we use $|E|$ to denote the total number of edges in the graph \emph{after} our  we are done with our procedure, not the current number of edges.

Now that we have selected the number of edges to add, it is time to add edges to $E$.
Because node locations are random (i.e., every node has a random category), it is nontrivial to select a candidate edge.
This contrasts with SKG, where the edge probabilities for a given node is deterministic.
Because of the stochastic locations of the nodes in MFNG, our fast sampling algorithm is only an approximation.
The algorithm proceeds by selecting node categories level by level, for each each edge.
To select categories, we sample an index $(c, c')$ of a matrix $\m Q$:
\[
  \m Q_{ij} = p_{ij} \ell_i \ell_j.
\]
The sampling is done proportional to the entries in $\m Q$.
The matrix $\m Q$ reflects the relative probability mass corresponding to an edge falling into those categories.
In other words, it denotes the probability of selecting the categories $c_1$ and $c'_1$ at a given level \emph{and} the edge surviving the level.
The category sampling is performed $k$ times, one for each level of recursion.
This gives two $k$-tuples of categories: $c = (c_1, \ldots, c_k)$ and $c' = (c_1', \ldots, c_k')$.

Now we want to add an edge between nodes $u$ and $u'$ that have the categories $c$ and $c'$.
However, we have to be careful about the number of nodes that have the same category.
We can think of the category pair $(c, c')$ as a box $B$ on the generating measure.
Consider two boxes $B_1$ and $B_2$ and suppose that both have the same area in the unit square,
and the probability between potential boxes in $B_1$ and $B_2$ is the same.

A simple example is the following case:
\begin{itemize}
\item $k = m = 2$
\item $p_{11} = p_{22} = p_{12} = 0.5$
\item $\ell_1 = 0.5$, $\ell_2 = 0.5$
\end{itemize}
The edge probabilities in any two boxes $B_1$ and $B_2$ in the measure are the same, and the probabilities
of selecting either box (from sampling the $\m Q$ matrix) are the same.
However, because of the randomness categories for nodes, there may be 10 node pairs in $B_1$ and only one node pair in $B_2$.
If we simply pick a node pair at random from a box, the probability of connecting the node pair in $B_2$ is much higher
than the probability of connecting node pairs in $B_1$.

To overcome this discrepancy, we take into account the difference between the \emph{expected} number of nodes pairs in a box and the \emph{actual} number of node pairs in a box.
Note that the joint distribution of nodes is Multinomial$(n; l_1, l_2, \ldots, l_{m^k})$ where $l_i$ denotes the length of interval $i$ (after recursive expansion).
Let $p_{c,c'}$ be the edge probability in the box corresponding to the category pair $(c, c')$.
Let the box's sides have lengths $l$ and $l'$.
Using standard properties of the Multinomial distribution,
\[
   n_{c,c'} = \left\{
     \begin{array}{lr}
       |V| (|V| l^2 - l^2 + l) & \text{if } c = c' \\
       |V| (|V| - 1) l l' & \text{if } c \neq c'
     \end{array}
   \right.
\]
Finally, we sample
\[
e_{\text{to add}} \sim \text{Poisson}\left(\frac{n_{c,c'}}{\lambda |V_c|||V_{c'}|}\right),
\]
where
\[
V_c = \{v \in V | \text{category of v is } c\}.
\]
We then add $e_{\text{to add}}$ edges to the box $(c, c')$.
Thus, if there are many more node pairs in a box than expected, we add more edges to the box.

There are a couple of details we have swept under the rug.
First, we haven't discussed what to do if the box $(c, c')$ is empty.
In this case, we simply re-sample $c$ and $c'$.
In practice, this does not occur too frequently.
Second, we have introduced some dependence between edges, and MFNG samples edges independently.
For this reason, we use have included the accuracy factor $\lambda$.
By increasing $\lambda$, the sampling takes longer, but there is less edge dependence.

\subsection{Performance}
The speedup achieved by this fast approximation algorithm really depends on the type of graph.
We trade an $\mathcal{O}(|V|^2)$ algorithm for an algorithm that takes $\mathcal{O}(|E| \log |V|)$ time if there are no rejected tries due to empty boxes, edges that are already present, etc.
In the case that the graph is sparse and $k <\approx \log_m n$, this is fine.
However, for denser graphs, this fast method will actually turn out to be slower.
To arrive at a complexity of $\mathcal{O}(|E| \log |V|)$ we note that it takes $\mathcal{O}(|V| \log |V|)$ time to compute the categories for each $u \in V$.
Then, assuming that the number of retries is small, the while loop of Algorithm~\ref{alg:fast} is executed $\mathcal{O}(|E|)$ times,
each taking $\mathcal{O}(k) = \mathcal{O}(\log |V|)$ steps.
Therefore, in total, the algorithm has complexity $\mathcal{O}( |E| \log |V|)$.

\begin{algorithm}[t]
   \caption{Fast approximate sampling algorithm}
   \label{alg:fast}
\begin{algorithmic}[1]
    \STATE {\bfseries Input:} Generating measure $\W_k(\m P, \ell)$,
      accuracy factor $\lambda$
    \STATE {\bfseries Output:} Graph $G$ with distribution approximately $\W_k(\m P, \ell)$.
    \STATE Add $|V|$ nodes by uniformly sampling on $[0,1]$ and assigning the proper categories to each node.
    \STATE Set $V_c = \{v \in V | \text{category of v is } c\}$ for each category $c$.
    \STATE Fix number of candidate edges $|E| = \lfloor \mathcal{E} \rfloor$,
            where $\mathcal{E} \sim N(\mu_{|E|}, \sigma_{|E|})$.
    \STATE Compute $\m Q$, where $Q_{ij} = p_{ij} \ell_i \ell_j$ for $1 \le i, j, \le m$
    \STATE Set $e_{\text{global}} = 0$
    \WHILE{$e_{\text{global}} < |E|$}
      \FOR{$h=1$ {\bfseries to} $k$}
          \STATE Pick category $c_h, c_h'$ independently and with probability proportional to $\m Q_{c_h, c_h'}$
      \ENDFOR
      \STATE $c = (c_1, \ldots, c_k)$, $c' =  (c'_1, \ldots, c'_k)$.
      \STATE Set $l$, $l'$ to lengths of interval corresponding to $c$, $c'$
      \IF{$|V_c|||V_{c'}| \neq 0$}
        \IF{$c = c'$}
          \STATE $n_{c, c'} = |V| (|V| l^2 - l^2 + l)$
        \ELSE
          \STATE $n_{c,c'} = |V| (|V| - 1) l l'$
        \ENDIF
        \STATE Draw $e_{\text{to add}} \sim $Poisson$\left(n_{c,c'} / (\lambda |V_c|||V_{c'}|)\right)$
        \STATE Set k = 0
        \STATE Set $e_{\text{local}} = 0$
        \WHILE{$e_{\text{local}} < e_{\text{to add}}$ and $k < \max_k$}
          \STATE Pick $u \in V_c$ and $v \in V_{c'}$ uniform at random.
        %  \STATE set not edge found
          \IF{$(u,v) \notin E$ and $u \neq v$}
            \STATE Add $(u,v)$ to $E$
            \STATE Set $e_{\text{local}} = e_{\text{local}} + 1$
          \ENDIF
          \STATE Set $k = k + 1$
        \ENDWHILE
        \STATE $e_{\text{global}} = e_{\text{global}} + e_{\text{local}}$
      \ENDIF
    \ENDWHILE
    \STATE Return $G = (V,E)$
\end{algorithmic}
\end{algorithm}

\section{Experimental results}
\label{sec:experimental}
In the next sections, we demonstrate the effectiveness of our approach to model networks.
First, we show that our method is able to recover the multifractal structure if we generate synthetic graphs following the MFNG paradigm.
Thereafter, we consider several real-world networks and compare the performance of our method to popular methods using SKG.

\begin{figure}[tb]
\usetikzlibrary{arrows,positioning}
\centering
\begin{subfigure}{0.12\textwidth}
  \centering
   \resizebox{\linewidth}{!}{

     \begin{tikzpicture}
       \node[draw,circle] (v2) at (-4.5,3.5) {};
       \node[draw,circle] (v3) at (-3.5,3.5) {};
       \node[draw,circle,label=above:{$S_2$}] (v1) at (-4,4.5) {};

       \draw  (v1) edge (v2);
       \draw  (v1) edge (v3);

     \end{tikzpicture}

   }
\end{subfigure}
% no empty line
\begin{subfigure}{0.12\textwidth}
  \centering
   \resizebox{\linewidth}{!}{
     \begin{tikzpicture}
       \node[draw,circle] (v2) at (-4.5,3.5) {};
       \node[draw,circle] (v3) at (-3.5,3.5) {};
       \node[draw,circle] (v4) at (-4,3.5) {};
       \node[draw,circle,label=above:{$S_3$}] (v1) at (-4,4.5) {};

       \draw  (v1) edge (v2);
       \draw  (v1) edge (v3);
       \draw  (v1) edge (v4);
     \end{tikzpicture}
   }
\end{subfigure}
% no empty line
\begin{subfigure}{0.17\textwidth}
  \centering
   \resizebox{\linewidth}{!}{
     \begin{tikzpicture}
       \node[draw,circle] (v2) at (-4.75, 3.5) {};
       \node[draw,circle] (v3) at (-4.25, 3.5) {};
       \node[draw,circle] (v4) at (-3.75, 3.5) {};
       \node[draw,circle] (v5) at (-3.25, 3.5) {};
       \node[draw,circle,label=above:{$S_4$}] (v1) at (-4,4.5) {};

       \draw  (v1) edge (v2);
       \draw  (v1) edge (v3);
       \draw  (v1) edge (v4);
       \draw  (v1) edge (v5);
     \end{tikzpicture}
   }
\end{subfigure}
\begin{subfigure}{0.11\textwidth}
  \centering
   \resizebox{\linewidth}{!}{
     \begin{tikzpicture}
       \node[draw,circle] (v2) at (3,3) {};
       \node[draw,circle] (v3) at (3.75,3.75) {};
       \node[draw,circle,label=above:{$C_3$}] (v1) at (3,3.75) {};

       \draw  (v1) edge (v2);
       \draw  (v1) edge (v3);
       \draw  (v2) edge (v3);
     \end{tikzpicture}
   }
\end{subfigure}
\begin{subfigure}{0.11\textwidth}
  \centering
   \resizebox{\linewidth}{!}{
     \begin{tikzpicture}
       \node[draw,circle] (v2) at (3,3) {};
       \node[draw,circle] (v3) at (3.75,3.75) {};
       \node[draw,circle] (v4) at (3.75,3) {};
       \node[draw,circle,label=above:{$C_4$}] (v1) at (3,3.75) {};

       \draw  (v1) edge (v2);
       \draw  (v1) edge (v3);
       \draw  (v1) edge (v4);
       \draw  (v2) edge (v3);
       \draw  (v2) edge (v4);
       \draw  (v3) edge (v4);
     \end{tikzpicture}
   }
\end{subfigure}
\caption{$d$-stars and $t$-cliques features that are counted in the experiments in Section~\ref{sec:experimental}.}
\label{fig:features}
\end{figure}
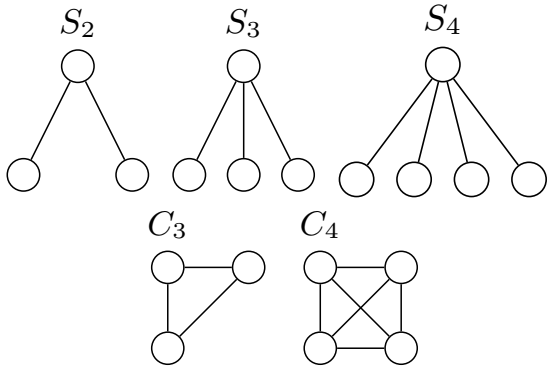

\subsection{Identifiability and learning on synthetic networks}
\label{sec:synthetic}

\begin{figure*}[tb]
\vskip 0.2in
\begin{center}
\includegraphics[width=3.3in]{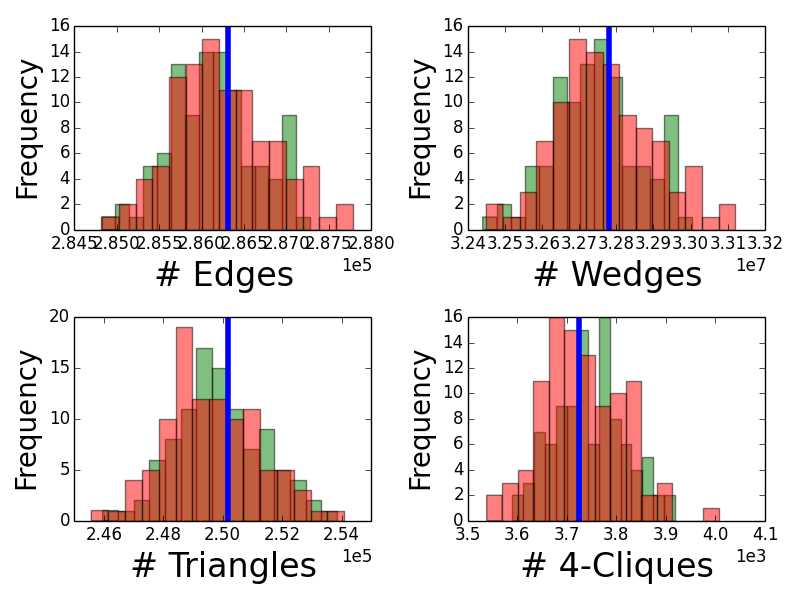}
\includegraphics[width=3.3in]{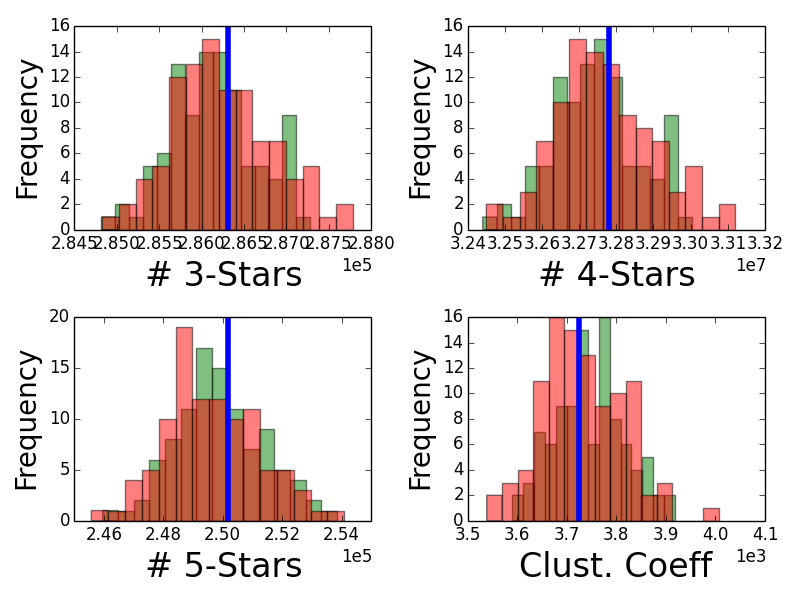}
\caption{
Empirical distributions of feature counts and clustering coefficient for the original MFNG (green) and the retrieved MFNG (red) found with the method of moments algorithm.
The blue line is the feature count from the single sample of the original MFNG used in the method of moments.
In this case, the original MFNG followed an Erd\H{o}s-R\'{e}nyi model.
The original and retrieved measures produce similar distributions.
}
\label{fig:hist_ER_recovery}
\end{center}
\vskip -0.2in
\end{figure*}

\begin{figure*}[tb]
\vskip 0.2in
\begin{center}
\includegraphics[width=3.3in]{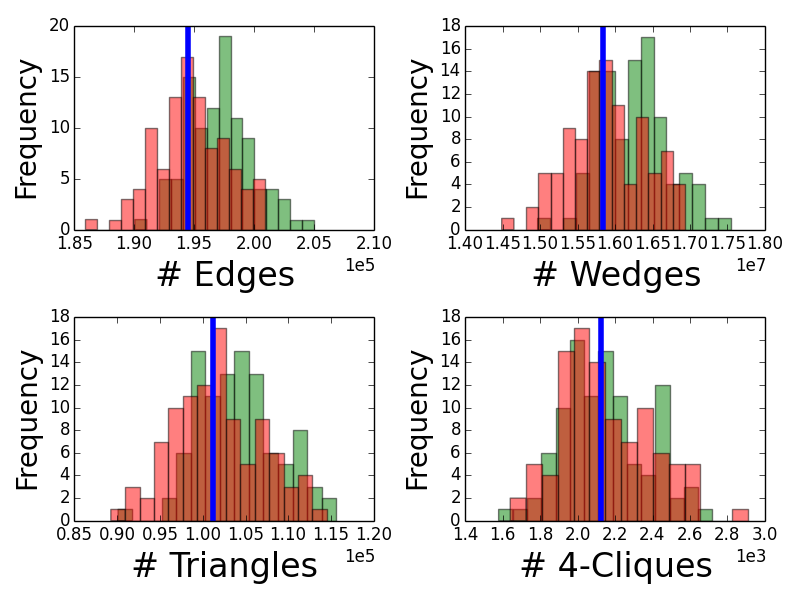}
\includegraphics[width=3.3in]{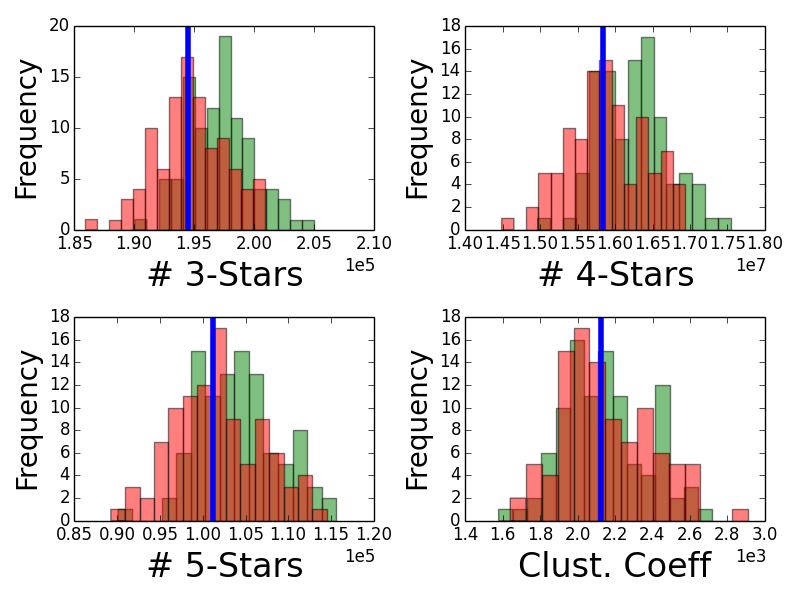}
\caption{
Empirical distributions of feature counts and clustering coefficient for the original MFNG (green) and the retrieved MFNG (red) found with the method of moments algorithm.
The blue line is the feature count from the single sample of the original MFNG used in the method of moments.
The original and retrieved measures produce similar distributions.
}
\label{fig:hist_mfng_recovery}
\end{center}
\vskip -0.2in
\end{figure*}

\begin{figure*}[tb]
\vskip 0.2in
\begin{center}
\includegraphics[width=3.3in]{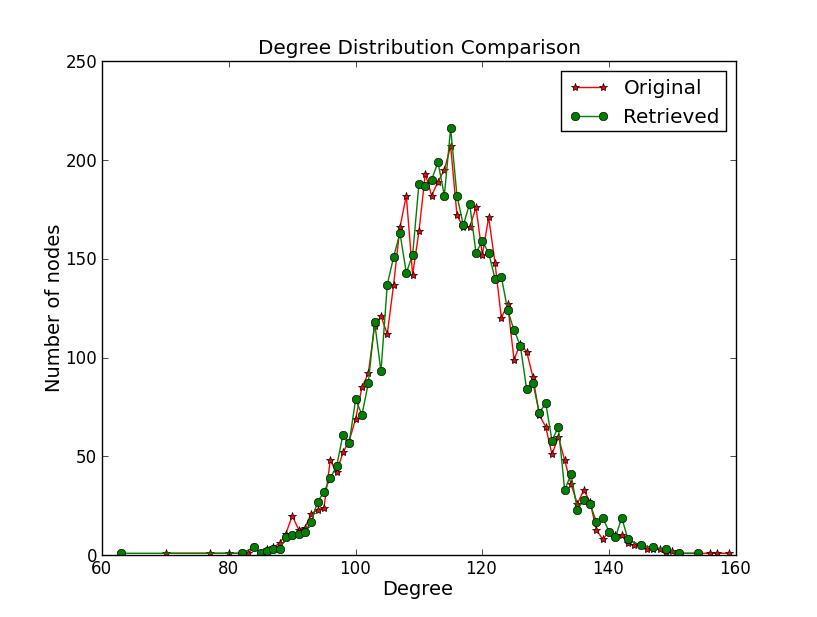}
\includegraphics[width=3.3in]{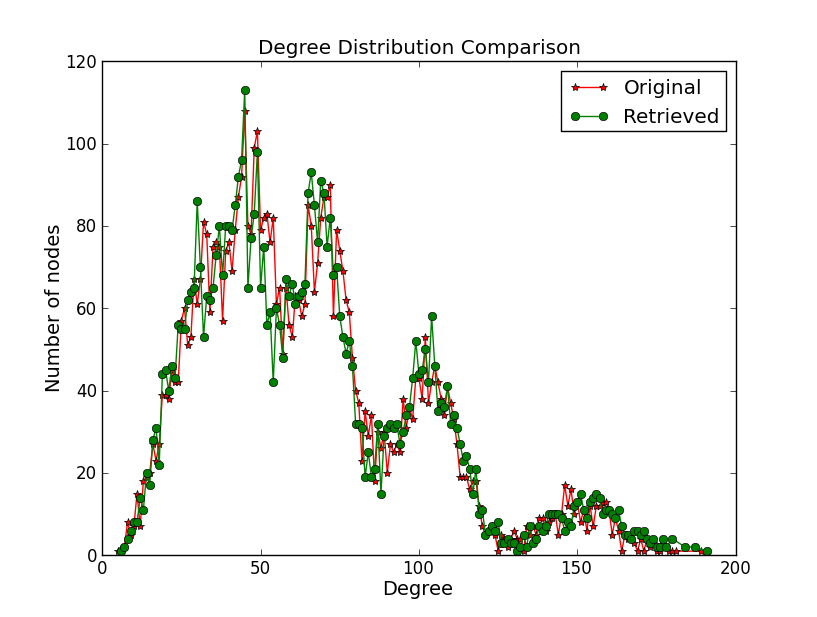}
\caption{
On the left, degree distribution of graphs generated according to the original Erd\H{o}s-R\'{e}nyi measure (red) given in Table \ref{tab:ER_recovery} and the retrieved measure (green). On the right, degree distribution of graphs generated according to the original measure (red) described in Table \ref{tab:mfng_recovery} and the retrieved measure (green).
The retrieved measure was found by the method of moments algorithm from Section~\ref{sec:mom}.
The original and retrieved measures produce almost identical distributions.
}
\label{fig:dd_recovery}

\end{center}
\vskip -0.2in
\end{figure*}

\begin{table*}[tb]
\centering
\begin{tabular}{l c c c c c c c c}
\toprule
Generating Measure  & $|V|$ & $m$ & $k$ & $\ell_1$ & $\ell_2$ & $p_{11}$ & $p_{12}$ & $p_{22}$ \\
\midrule
Original $\W_k$ & 5,000 & 2 & 12 & 0.5 & 0.5 & 0.73 & 0.73 & 0.73  \\
Retrieved  $\bar{\W}_{\bar{k}}$ & 5,000 & 2 & 10 &  0.0574 & 0.9425 & 0.0074 & 0.7273 & 0.6829 \\
\bottomrule
\end{tabular}
\caption{
Comparison of original measure to the measure retrieved by using the method of moments algorithm from Section~\ref{sec:mom}.
The graph features used for the method of moments were: number of edges, number of $d$-stars for $d = 2, 3, 4, 5$, and number of $t$-cliques for $t = 3, 4$.
The original generative measure is an Erd\H{o}s-R\'{e}nyi random graph model.
While the recursion depth, probabilities, and lengths vector are quite different, the retrieved measure is similar to the same Erd\H{o}s-R\'{e}nyi model (see the discussion in Section~\ref{sec:synthetic}).
}
\label{tab:ER_recovery}
\end{table*}

\begin{table*}[tb]
\centering
\begin{tabular}{l c c c c c c c c}
\toprule
Generating Measure  & $|V|$ & $m$ & $k$ & $\ell_1$ & $\ell_2$ & $p_{11}$ & $p_{12}$ & $p_{22}$ \\
\midrule
Original $\W_k$ & 6,000 & 2 & 10 & 0.25 & 0.75 & 0.59 & 0.43 & 0.78  \\
Retrieved  $\bar{\W}_{\bar{k}}$ & 6,000 & 2 & 9 &  0.2728 & 0.7272 & 0.5431 & 0.4101 & 0.7593 \\
\bottomrule
\end{tabular}
\caption{
Comparison of original measure to the measure retrieved by using the method of moments algorithm from Section~\ref{sec:mom}.
The graph features used for the method of moments were: number of edges, number of $d$-stars for $d = 2, 3, 4, 5$, and number of $t$-cliques for $t = 3, 4$.
All parameters in the retrieved measure are remarkably similar to the parameters in the original measure.
}
\label{tab:mfng_recovery}
\end{table*}

Before turning to real networks, it is important to see if our method of moments algorithm recovers the structure of graphs that are actually generated by MFNG with some measure $\W_k$.
In other words, can our method of moments identify graphs generated from our model?
There are two success metrics for recovery of the generating measure.
First, we want the method of moments to recover a measure similar to $\W_k$.
Second, even if we cannot recover the measure, we want a measure that has similar feature counts.
Our experiments in this section show that we can be successful in both metrics.
If we can recover a measure with similar moments, then the new measure will be a useful model for the old one.
This is our interest when modeling real data sets in Section~\ref{sec:learning}.

Our basic experiment is as follows:
\begin{enumerate}
\item
Construct a measure $\W_k(\m P, \ell)$ and generate a \emph{single} graph $G$ from the measure.

\item
Run the method of moment algorithm from Section~\ref{sec:mom} with $G$ using 10,000 random restarts.
Fit the moments for the following graph features: number of edges, number of $d$-stars for $d = 2, 3, 4, 5$, and number of $t$-cliques for $t = 3, 4$.
The measure given by the method of moments is denoted $\bar{W}_k(\bar{\m P}, \bar{\ell})$.

\item
To compare $\W_k(\m P, \ell)$ and $\bar{W}_k(\bar{\m P}, \bar{\ell})$,
sample 100 graphs from each measure and look at the histogram of the features that were considered by the method of moments algorithm.
\end{enumerate}

We used two different measures $\W_k$ for testing.
The first was equivalent to an Erd\H{o}s-R\'{e}nyi random graph.
This is modeled by a generating measure $\W_{k}(\m P, \ell)$
where every entry of $\m P$ is identical.
In this case, MFNG is an Erd\H{o}s-R\'{e}nyi generative model with edge probability $\m P_{11}^k$,
independent of $\ell$.
Table~\ref{tab:ER_recovery} shows the retrieved measure  $\bar{\W}_{\bar{k}}(\bar{\m P}, \bar{\ell})$ and the original Erd\H{o}s-R\'{e}nyi measure $\W_k(\m P, \ell)$.
While $\bar{\m P}$ and $\bar{\ell}$ are quite different than $\m P$ and $\ell$,
 $\bar{\W}_{\bar{k}}(\bar{\m P}, \bar{\ell})$ still represents a measure close to an Erd\H{o}s-R\'{e}nyi random graph model.
The reason is that the length vector $\ell$ is heavily skewed to the second component ($\ell_2 \approx 0.94$).
In expectation, $0.94^{\bar{k}} \approx 0.53$ of the nodes correspond to the same category at each level.
These nodes are all connected with probability $0.6829^{\bar{k}} \approx 0.022$, which is  nearly the same as the edge probability in the original Erd\H{o}s-R\'{e}nyi measure.
Figure~\ref{fig:hist_ER_recovery} shows the histograms of the features that were used in the method of moments algorithms (as well as the clustering coefficient).
The green histogram is the data for graphs sampled from $\W_k(\m P, \ell)$,
the red histogram is the same data for graphs sampled from  $\bar{\W}_{\bar{k}}(\bar{\m P}, \bar{\ell})$,
and the blue line is the feature count in the original graph $G$ used as input to the method of moments.
There is remarkable overlap between the empirical distribution of the features for  $\bar{\W}_{\bar{k}}(\bar{\m P}, \bar{\ell})$ and the distribution of the features for the original measure.

For a second experiment, we used an original measure $\W_k(\m P, \ell)$ that did not possess the uniform generative structure of Erd\H{o}s-R\'{e}nyi random graphs.
Table~\ref{tab:mfng_recovery} shows the retrieved measure and the original measure.
In this case, the method of moments identified a similar generative measure.
The parameters $\bar{k}$, $\bar{\m P}$, and $\bar{\ell}$ are remarkably similar to $k$, $\m P$, and $\ell$.
Figure~\ref{fig:hist_mfng_recovery} shows the distribution of the features in graphs sampled from the two measures.
Again, there is rather significant overlap in the empirical distributions.

Finally, we compare the degree distributions of the original and retrieved measures in Figure~\ref{fig:dd_recovery}.
The degree distributions are nearly identical.

These results show that the method of moments algorithm described in Section~\ref{sec:mom} can successfully identify MFNG instances using a single sample.
In the case when the original measure was Erd\H{o}s-R\'{e}nyi, the retrieved measure parameters looked different but the measures produced similar graphs.
When a more sophisticated MFNG was used, the retrieved measure had similar parameters \emph{and} produced similar graphs.

\subsection{Learning on real networks}
\label{sec:learning}

\begin{table*}[tb]
\centering
\begin{tabular}{l l l c c c c c c c}
\toprule
\textbf{Network}  & $m$ & features for & $|V|$ & $|E|$ & $S_2$ & $S_3$ & $S_4$ & $C_3$ & $C_4$ \\
method & & fitting \\\midrule
\textbf{Gnutella} & -- & -- & 62,586 & 147,892 & 1.57e+06 & 8.17e+06 & 4.38e+07 & 2.02e+03 & 1.6e+01 \\
MFNG MoM              & $2$ & all      & -- &  1.13 & 1.00 & 0.97 & 1.00 & 1.00 & 1.00 \\
MFNG MoM              & $3$ & all      & -- &  1.00 & 0.97 & 1.00 & 1.00 & 1.00 & 1.00 \\
MFNG MoM              & $2$ & $|E|, S_2, C_3$   & -- & 1.00 &   1.00  &   1.10 & 1.15 &   1.00 &   0.05 \\
MFNG MoM              & $3$ & $|E|, S_2, C_3$   & -- & 1.00 &   1.00  &   1.21 & 1.54 &   1.00 &   0.26\\
SKG MoM           & -- & $|E|, S_2, S_3, C_3$   & -- & 1.14 & 1.00 & 1.00 & 18.34 & 0.30 & < 0.69  \\
KronFit           & --  & --       & -- &  0.54 & 0.30 & 0.23 & 3.67 & 0.06 & $<$ 0.01 \\ \midrule
\textbf{Citation} & -- & -- & 34,546 & 42,0921 & 2.63e+07 & 1.34e+09 & 1.04e+10 & 1.28e+06 & 2.57e+06 \\
MFNG MoM & $2$ & all      & -- & 0.79 & 1.02 & 1.00 & 0.61 & 1.00 & 1.00 \\
MFNG MoM & $3$ & all      & -- & 1.00 & 1.03 & 1.00 & 1.00 & 1.00 & 1.00 \\
MFNG MoM & $2$ & $|E|, S_2, C_3$   & -- & 0.99 & 1.00 & 0.77 & 0.42 & 1.00 & 4.65 \\
MFNG MoM & $3$ & $|E|, S_2, C_3$   & -- & 1.00 & 1.00 & 0.85 & 0.60 & 1.00 & 1.08 \\
SKG MoM           & -- & $|E|, S_2, S_3, C_3$   & -- &  1.00 & 0.89 & 1.00 & 11.60 & 0.02 & < 0.01 \\
KronFit           & --  & --       & -- & 0.53 & 0.21 & 0.09 & 0.57 & $<$ 0.01 & $<$ 0.01 \\ \midrule
\textbf{Twitter} & -- & -- & 81,306 & 1,342,310 & 2.30e+08 & 6.35e+10 &  2.99e+13 & 1.31e+07 & 1.05e+08 \\
MFNG MoM & $2$ & all      & -- & 1.00 & 1.59 & 1.00 & 0.33 & 1.00 & 1.00 \\
MFNG MoM & $3$ & all      & -- & 1.00 & 1.16 & 1.00 & 1.00 & 0.89 & 1.00 \\
MFNG MoM & $2$ & $|E|, S_2, C_3$   & -- & 1.00 & 1.00 & 0.44 & 0.12 & 1.00 & 2.83 \\
MFNG MoM & $3$ & $|E|, S_2, C_3$   & -- & 1.00 & 1.00 & 0.44 & 0.11 & 1.00 & 2.71 \\
SKG MoM           & -- & $|E|, S_2, S_3, C_3$   & -- &  1.00 & 1.05 & 1.00 & 0.01 & 0.03 & $<$ 0.01 \\
KronFit            & --  & --       & -- & 0.69 & 0.30 & 0.10 & $<$ 0.01 & $<$ 0.01 & $<$ 0.01 \\ \midrule
\textbf{Facebook} & -- & -- & 4,039 & 88,234 & 9.31e+06 & 7.27e+08 & 9.71e+10 & 1.61e+06 & 3.00e+07 \\
MFNG MoM & $2$ & all      & -- & 0.96 & 1.19 & 1.00 & 0.42 & 1.00 & 1.00 \\
MFNG MoM & $3$ & all      & -- & 1.00 & 1.06 & 1.00 & 0.69 & 0.90 & 1.00 \\
MFNG MoM & $2$ & $|E|, S_2, C_3$   & -- & 0.90 & 1.00 & 0.80 & 0.34 & 1.00 & 1.88 \\
MFNG MoM & $3$ & $|E|, S_2, C_3$   & -- & 1.00 & 1.00 & 0.75 & 0.33 & 1.00 & 1.13 \\
SKG MoM  & -- & $|E|, S_2, S_3, C_3$   & -- &  1.00 & 1.03 & 1.00 & 0.19 & 0.08 & 0.03 \\
KronFit  & --  & --       & -- & 0.49 & 0.20 & 0.07 & 0.04 & 0.01 & $<$ 0.01 \\
\bottomrule
\end{tabular}
\caption{
Results of method of moments (MoM) fit to MFNG for several graphs.
Each column gives the ratio of the expected feature count to the true feature count.
$S_d$ is the number of $d$-stars in the graph, and $C_t$ is the number of $t$-cliques in the graph.
A value of 1.00 means that the moment is an exact fit to two decimal places.
In all cases, MFNG is able to fit many of the feature counts exactly in expectation.
For MFNG, we fit all feature moments listed and fitting just the number of edges, wedges, and triangles.
The SKG MoM and KronFit are included for comparison.
For these methods, $S_4$ and $C_4$ were estimated by taking the mean from 10 sample graphs
(closed-form moment formulas are not available for these feature counts).
Our MFNG MoM outperforms both KronFit and SKG MoM in fitting feature moments.
}
\label{tab:mom_fits}
\end{table*}

\begin{figure*}[tb]
\vskip 0.2in
\begin{center}
% Note: can change the width to 1.72in to make them all fit on one line
\includegraphics[width=1.72in]{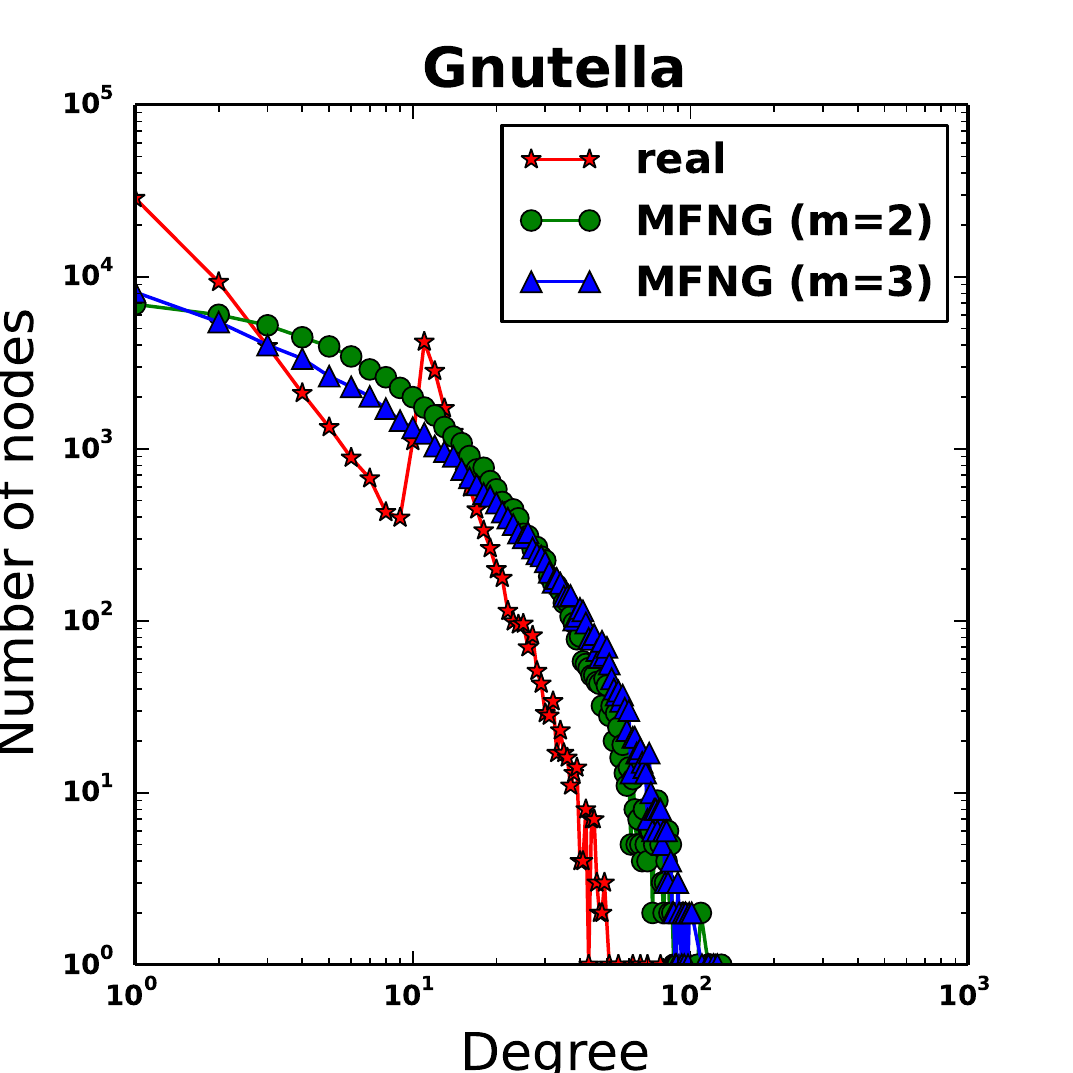}
\includegraphics[width=1.72in]{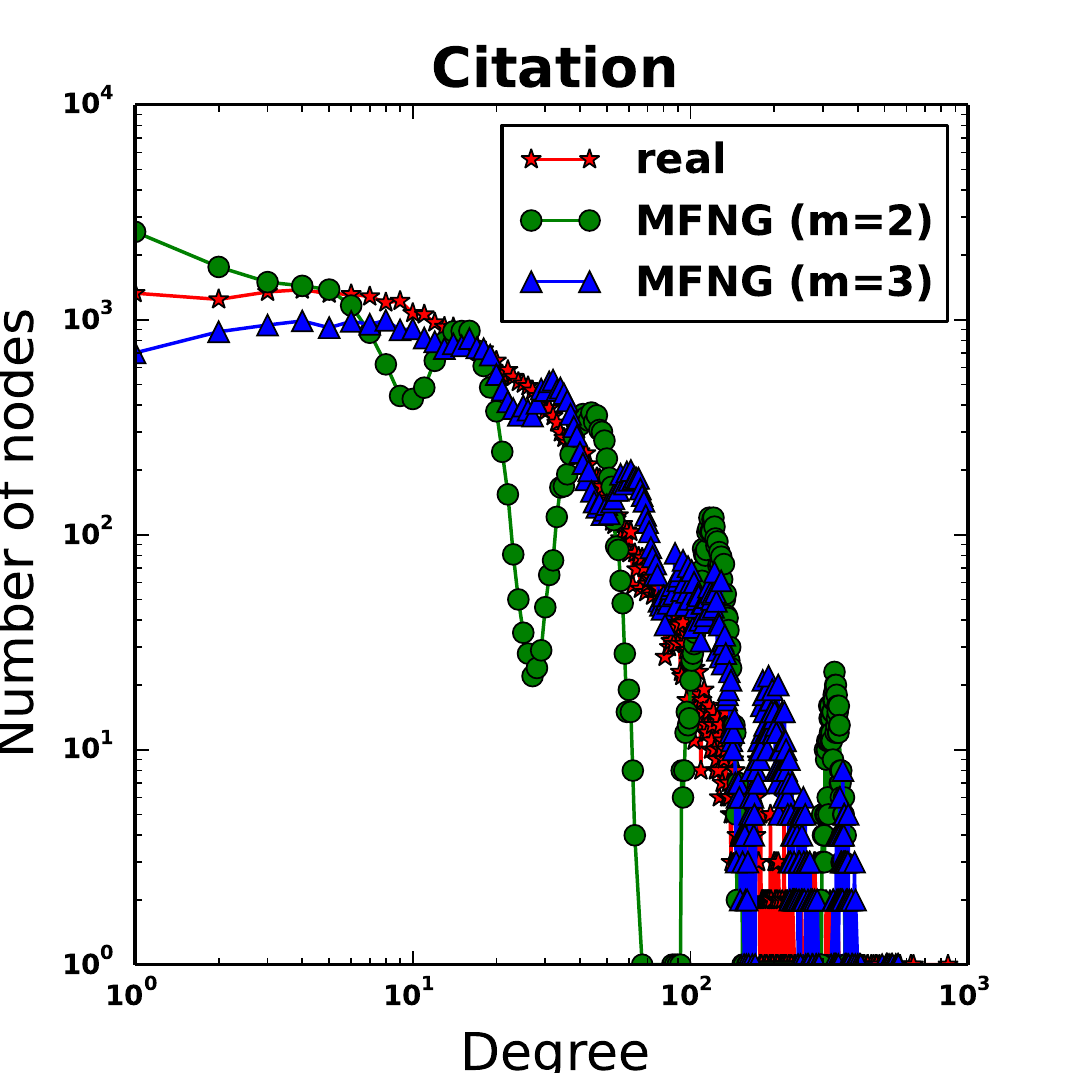}
\includegraphics[width=1.72in]{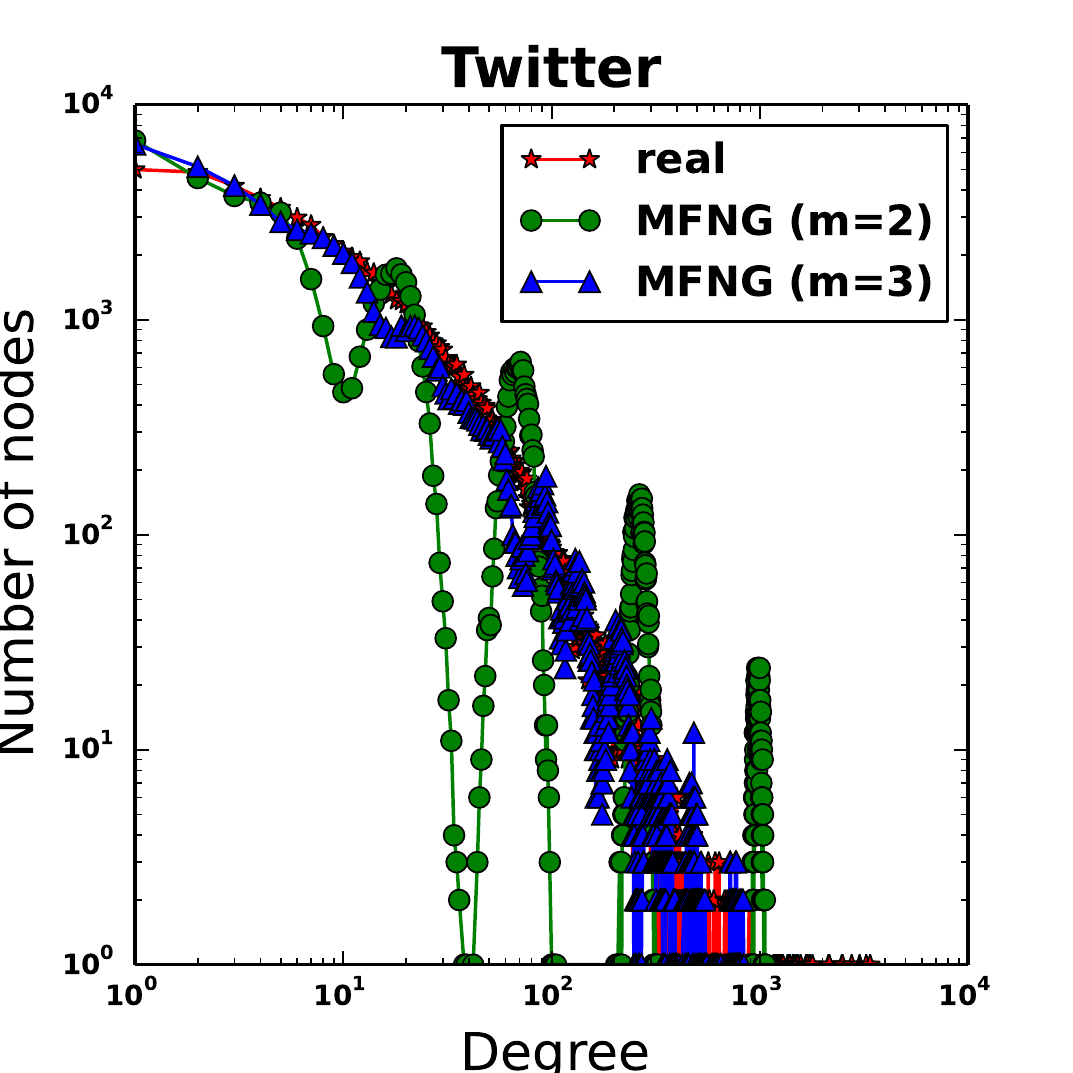}
\includegraphics[width=1.72in]{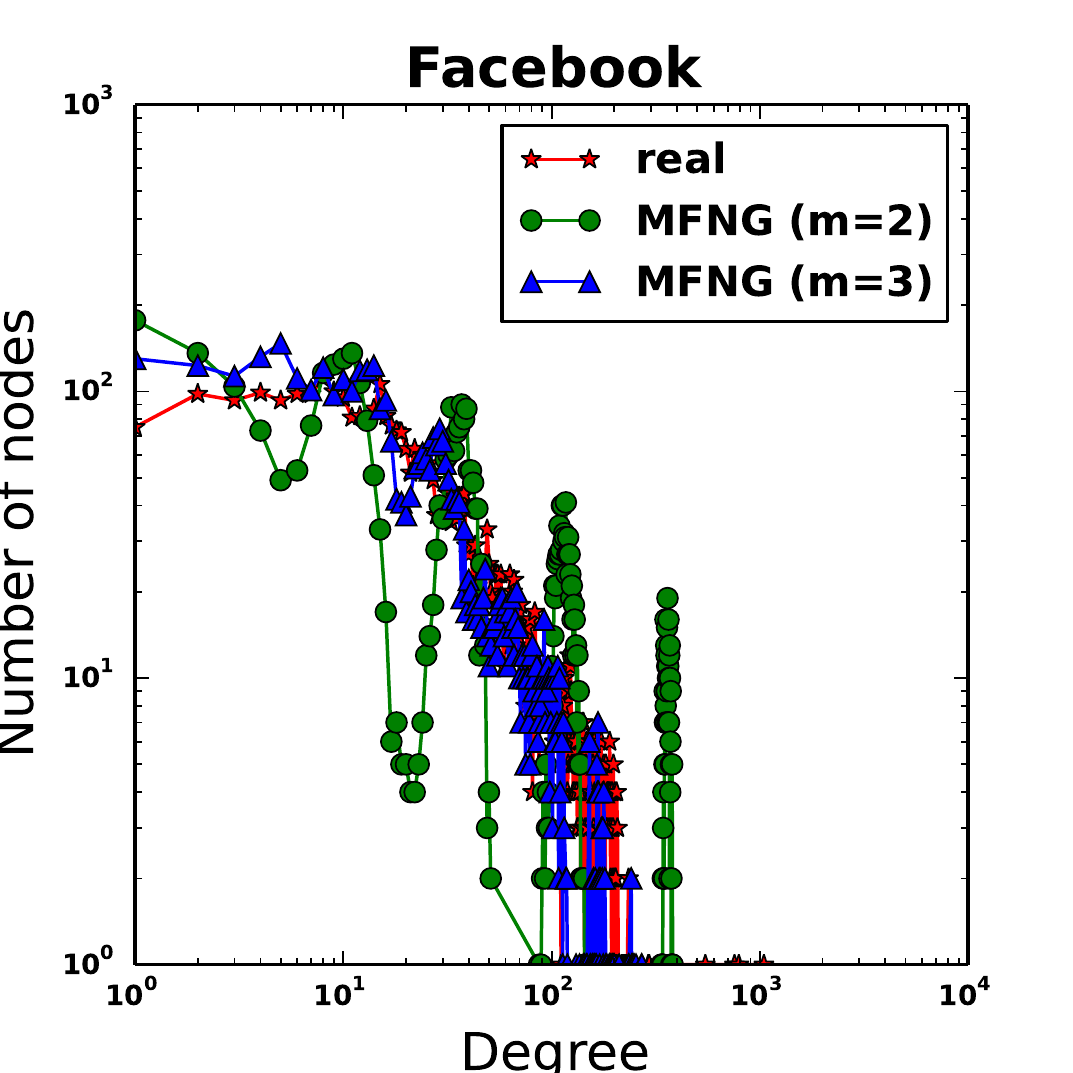}
\caption{Degree distributions for the original graphs and MFNG graphs
         with $m = 2, 3$ for several networks.
         The degree distributions of the MFNG graphs are similar
         to those of the original network, even though we only fit
         $d$-star and $t$-clique moments.
         In the Twitter, Citation, and Facebook graphs, the MFNG fit
         with $m = 2$ results in oscillating degree distributions.
         In Section~\ref{sec:noise}, we show how to add noise to dampen
         the oscillations.
         The graph samples were generated with the fast sampling algorithm in Section~\ref{sec:fast}.
        }
\label{fig:mfng_dd}
\end{center}
\vskip -0.2in
\end{figure*}

\begin{figure*}[tb]
\vskip 0.2in
\begin{center}
% Note: can change the width to 1.5in to make them all fit on one line
\includegraphics[width=2.2in]{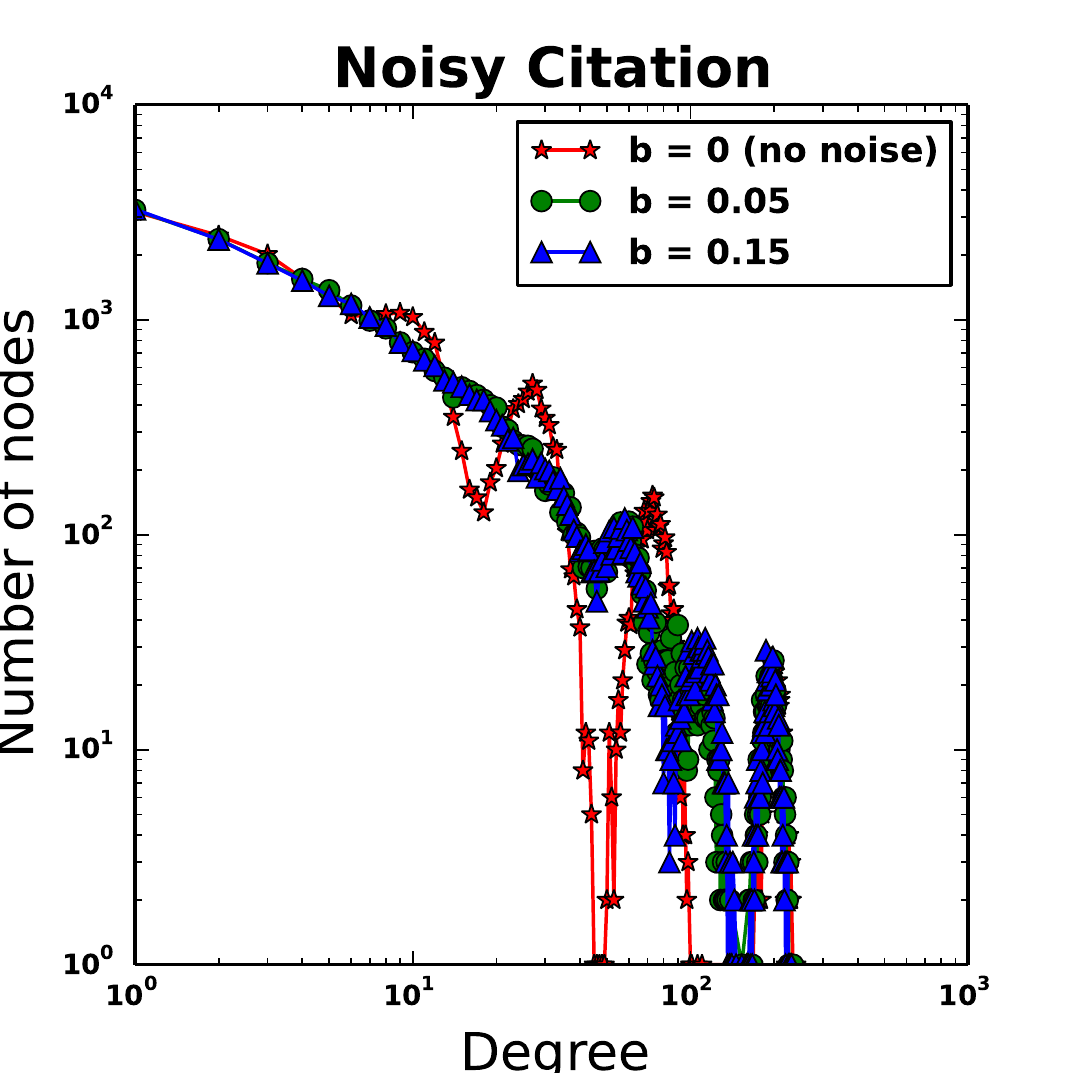}
\includegraphics[width=2.2in]{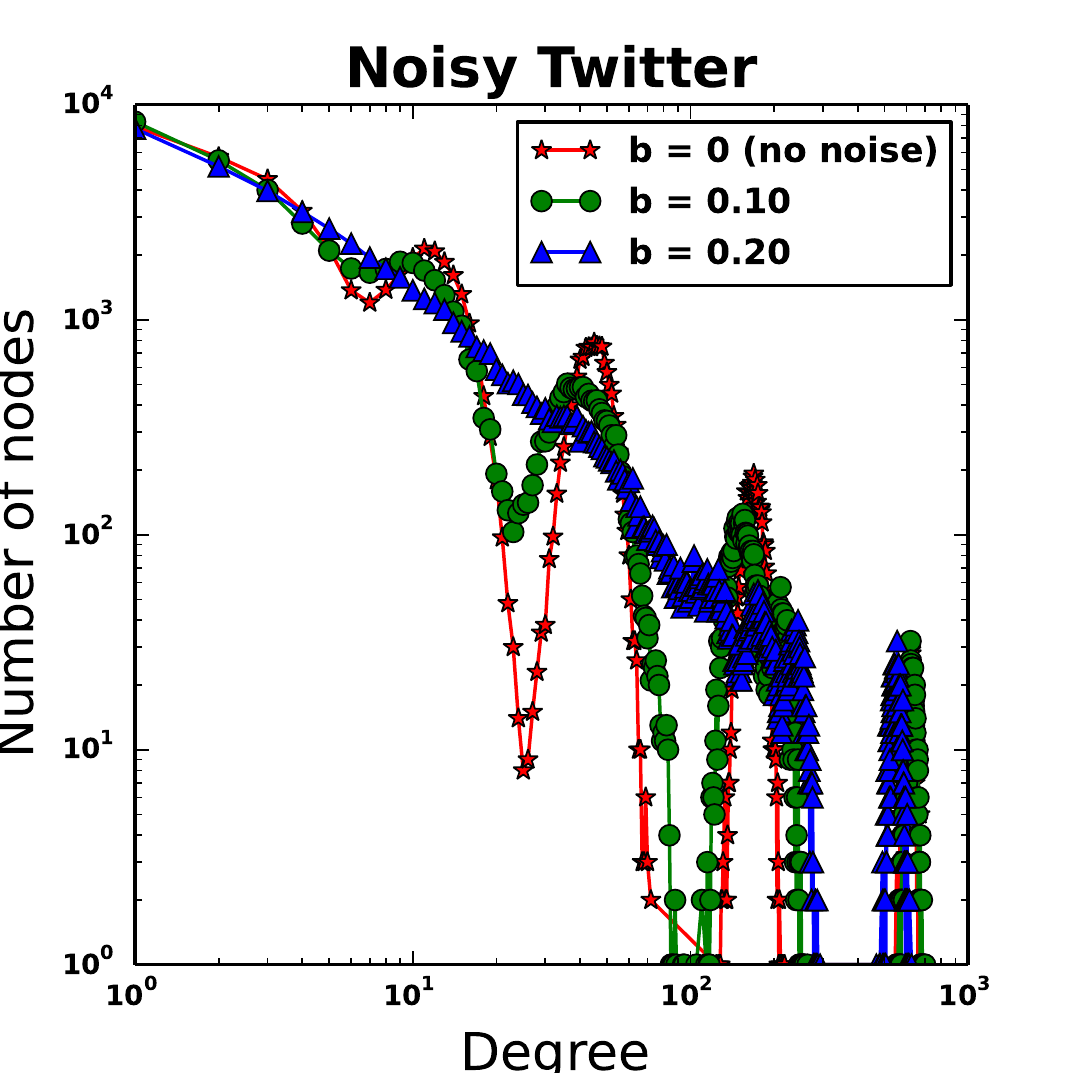}
\caption{Degree distributions for fitting the citation and Twitter networks
         to Noisy MFNG with varying degrees of noise.
         We see that adding noise dampens the oscillations in the degree
         distributions.  At the far end of the tail, it is still
         difficult to control the degree distribution.
         The graph samples were generated with the fast sampling algorithm in Section~\ref{sec:fast}.
        }
\label{fig:noisy_mfng}
\end{center}
\vskip -0.2in
\end{figure*}

We now show how the method of moments from Section~\ref{sec:mom} performs when
fitting to the following four real-world networks to MFNG:
\begin{enumerate}
\item The Gnutella graph is a network of host computers sharing files on August 31, 2012 \cite{leskovec2007graph}.
\item The Citation network is from a set of high energy physics papers from arXiv \cite{leskovec2005graphs}.
\item The Twitter network is a combination of several ego networks from the Twitter follower graph \cite{mcauley2012learning}.
\item The Facebook network is a combination of several ego networks from the Facebook friend graph \cite{mcauley2012learning}.
\end{enumerate}

All data sets are from the SNAP collection.
We use the optimization procedure described in Section~\ref{sec:mom} with 2,000 random restarts.
The features we use (the $F_i$ in Section~\ref{sec:mom})
are number of edges, wedges ($S_2$), $3$-stars ($S_3$), $4$-stars ($S_4$), triangles ($C_3$), and $4$-cliques ($C_4$).
For each network, we fit with $m = 2, 3$ and with $k = \lceil \log_m(|V|) \rceil$.
While $k$ can be arbitrary, a smaller value of $k$ leads to many nodes
belonging to the same categories and hence having the same statistical properties.
In large graphs, this causes a ``clumping'' of properties such as degree distribution near a small set of discrete values.
While smaller $k$ may be satisfactory for testing algorithms, keeping $k$ near $\log_m(|V|)$ produces more realistic graphs.
In an additional set of experiments, we only fit the number of edges, wedges, and triangles.
We also compare against KronFit and the SKG method of moments \cite{gleich2012moment}.

The results of the optimization procedure for all experiments are in Table~\ref{tab:mom_fits}.
Overall, for both $m = 2$ and $m = 3$, the method of moments can effectively match most feature counts.
The number of $4$-stars ($S_4$) was the most difficult parameter to fit.
We see that when only fitting the number of edges, wedges, and triangles,
the other feature moments can be significantly different from the original graph.
In particular, the number of $4$-cliques tends to be severely under- or over-estimated.
Although KronFit does not explicitly try to fit moments,
the results show that it severely underestimate several feature counts.
The SKG method of moments can fit three of the features,
which is consistent with results on other networks \cite{gleich2012moment}.

As mentioned in Section~\ref{sec:desired}, the clustering coefficient is three times
the ratio of the number of triangles ($3$-cliques) to the number of wedges ($2$-stars) in the graph.
The results of Table~\ref{tab:mom_fits} show that the method of moments can match
both the number of triangles and the number of wedges in expectation.
This does not make any guarantees about the \emph{ratio} of these random variables,
but the synthetic experiments (Section~\ref{sec:synthetic}) demonstrated that their variances
are not too large.
Therefore, the expectation of the ratio is near the ratio of the expectations,
and we approximately match the global clustering coefficient.
%Unfortunately, this does not guarantee anything about the per-degree clustering coefficient.

Figure~\ref{fig:mfng_dd} shows the degree distributions for the original networks
and a sample from the corresponding MFNG, using the fast sampling algorithm.
We see that, even though we only fit feature moments,
the global degree distribution is similar to the real network.
However, the MFNG degree distributions experience oscillations, especially in the case when $m = 2$.
This is a well-known issue in SKG \cite{seshadhri2013depth},
and we address this issue in Section~\ref{sec:noise}.
Finally, note that we only plot the degree distribution for a single MFNG sample.
The reason is that the samples tend to have quite similar degree distributions.
This lack of variance has been observed for SKG \cite{moreno2010tied},
and addressing this issue for MFNG is an area of future work.

\subsection{Noisy MFNG}
\label{sec:noise}

\begin{algorithm}[t]
   \caption{Noisy MFNG ($m = 2$)}
   \label{alg:noisy_mfng}
\begin{algorithmic}[1]
   \STATE {\bfseries Input:} $2 \times 2$ probability matrix $\m P$,
                             lengths vector $\ell$,
                             number of recursive levels $k$,
                             noise level $b$
   \STATE {\bfseries Output:} noisy MFNG matrix $G$
   \FOR{$i=1$ {\bfseries to} $k$}
   \STATE Sample $\mu_i \sim \text{Uniform}[-b, b]$.
   \STATE $\m P^{(i)} =
           \begin{pmatrix} \m p_{11} -\frac{2\mu_i\m p_{11}}{\m p_{11} + p_{22}}
                            & \m p_{12} + \mu_i \\
                            \m p_{21} + \mu_i
                            & \m p_{22} - \frac{2\mu_i\m p_{22}}{\m p_{11} + p_{22}}
           \end{pmatrix}$
   \STATE $\m P^{(i)} = \min(\max(\m P^{(i)}, 0), 1)$ entry-wise
   \STATE Sample $H_i \sim \W_1(\m P^{(i)}, \m l)$
   \ENDFOR
   \STATE $G := \cap_{i=1}^{k} H_i$
\end{algorithmic}
\end{algorithm}

Figure~\ref{fig:mfng_dd} shows that the graphs generated with MFNG experience
oscillations in the degree distribution.
The oscillations for the degree distribution are a well-known issue in SKG \cite{seshadhri2013depth}.
Seshadhri \emph{et al}.~present a ``Noisy SKG'' model that perturbs
the initiator matrix at each recursive level,
which dampens the oscillations.
Inspired by their work, we present a similar ``Noisy MFNG" in this section.

We first note that Figure~\ref{fig:mfng_dd} shows that using $m = 3$ results in less severe oscillations in the degree distribution.
The intuition behind this is that more categories get mixed at each recursive level, producing  a larger variety of edge probabilities.
For $m = 2$, we propose a Noisy MFNG model,
which is described in Algorithm~\ref{alg:noisy_mfng}.
The basic idea is to perturb the probability matrix slightly at each level.
In the context of Lemma~\ref{thm:coupling}, this means that the noisy MFNG
graph is the intersection of several graphs generated from slightly
different probability matrices.
The fast generation method still works in this case---a different matrix at each level determines the categories instead of one single matrix.
The probability perturbations are analogous to those performed by Seshadhri \emph{et al}.
We test Noisy MFNG on the citation network and the Twitter ego network,
and the results are in Figure~\ref{fig:noisy_mfng}.
The graphs are sampled using the fast sampling algorithm.
We see that increasing the noise significantly dampens the degree distribution.
However, the far end of the tail still experiences some oscillations.

\section{Discussion}
\label{sec:conclusion}
We have shown that the multifractal graph paradigm is well suited to model and capture the properties of real-world networks
by building on the work of Palla et al. \cite{palla2010multifractal} and incorporating several ideas from SKG.
The foundation of our theoretical work is Theorem~\ref{thm:main}, which has opened the door to quick evaluation of the
expected value of a number of important counts of subgraphs, such as $d$-stars and $t$-cliques.
Combined with standard optimization routines, we are able to fit large graphs easily.
Our method of moments algorithm identifies synthetically generated MFNG and also produces close fits for real-world networks.
It is quite amazing how fitting a few `local' properties leads to a generator that fits the overall structure of graphs well.

This would not be too useful if we were not able to also generate multifractal graphs of the same scale.
For this, we presented a fast heuristic approximation algorithm that generates such graphs in $\mathcal{O}(|E|\log |V|)$ complexity,
rather than the naive $\mathcal{O}(|V|^2)$ algorithm.
Since many real-world networks are sparse, this is a significant improvement.

Future work includes the development of approximation formulas for the moments of global properties like graph diameter and a more tailored approach in the optimization routines for the fitting.
A pressing issue is the theory behind the fast generation method.
While the generation tends to produce similar graphs to the naive generation in practice,
we want to prove that the approximation is good.
Furthermore, it is possible to improve the generation further by considering a parallel implementation.
Lastly, it would be interesting to do a theoretical analysis of the oscillatory degree distribution,
similar in spirit to \cite{seshadhri2013depth}.

\section{Acknowledgements}
We thank David Gleich and Victor Minden for helpful discussions. % Removing Jure; that's too dangerous to mention
Austin R. Benson is supported by an Office of Technology Licensing Stanford Graduate Fellowship. Carlos Riquelme is supported by a DARPA grant research assistantship under the supervision of Prof.\ Ramesh Johari.
Sven Schmit is supported by a Prins Bernhard Cultuurfonds Fellowship.

\bibliographystyle{abbrv}
\bibliography{mfng}
% ACM needs 'a single self-contained file'!
%
%APPENDICES are optional
%\balancecolumns
%\appendix
%Appendix A
%\section{Headings in Appendices}

\end{document}